\tikzset{snake it/.style={decorate, decoration=snake}}\usepackage{xxcolor}
\newtheorem{theorem}{Theorem}
\newtheorem{lemma}{Lemma}
\newtheorem{proposition}{Proposition}
\newtheorem{corollary}{Corollary}
\newtheorem{assumption}{Assumption}
\newtheorem{remark}{Remark}
\newcommand{\0}{\mathbf{0}}
\newtheorem{definition}{Definition}
\def\mcal{\mathcal}
\title{Going Viral:\\ 
Stability of Consensus-Driven Adoptive Spread}
\author{Sebastian F. Ruf,  Keith Paarporn, Philip E. Par\'{e}$^*$\thanks{$^*$
Sebastian F. Ruf is with Center for Complex Networks Research at Northeastern University and can be reached at {\tt s.ruf@northeastern.edu}.   Keith Paarporn is with the Department of Electrical and Computer Engineering at University of California Santa Barbara, {\tt kpaarporn@ucsb.edu}. Philip E. Par\'{e} is with the Division of Decision and Control Systems at KTH (\texttt{philipar@kth.se}).  This material is based on research partially sponsored by the National Science Foundation, grants CPS 1544953 and ECCS 1509302.
}}
\begin{document}
\maketitle
\begin{abstract}
The spread of new products in a networked population is often modeled as an epidemic. However, in the case of ``complex'' contagion, these models {do not capture nuanced, dynamic social reinforcement effects in adoption behavior}. In this paper, we investigate a model of complex contagion which allows a coevolutionary interplay between adoption, modeled as an SIS epidemic spreading process, and social reinforcement effects, modeled as consensus opinion dynamics. Asymptotic stability analysis of the all-adopt as well as the none-adopt equilibria of the combined opinion-adoption model is provided through the use of Lyapunov arguments. In doing so, sufficient conditions are provided which determine the stability of the ``flop'' state, where no one adopts the product and everyone's opinion of the product is least favorable, and the ``hit'' state, where everyone adopts and their opinions are most favorable. These conditions are shown to extend to the bounded confidence opinion dynamic under a stronger assumption on the model parameters. To conclude, numerical simulations demonstrate behavior of the model which reflect findings from the sociology literature on adoption behavior.
\end{abstract}
\section{Introduction}
    
How technologies, behaviors, and ideas (which we refer to generally as innovations or products) spread is a central question in the study of human behavior. The spread of innovations has been studied since the work of Tarde \cite{tarde1903laws} and is currently an area of active research, see for example \cite{valentehistory,centolabook,kiesling2012agent}. Understanding the complex spreading process of innovations has important ramifications, including the economic impact of understanding how products spread \cite{kiesling2012agent,rogers2010diffusion} as well as the potential benefits that could come from spreading healthy behaviors \cite{centolabook,Valente49,christakis2008collective,christakis2007spread}.  
    
    Innovation diffusion processes are often modeled either with epidemic models, where the innovation is a virus \cite{VM2009,Zhang_2014}, or with ``epidemic-like'' models, in which those that have a product will spread it to others { after a single contact} with some probability \cite{bass1969new,valente1996network,kalish1985product,reviewdiffusion}. This conceptual linkage between products and viruses has found its way into popular culture, resulting in companies hoping that their products will go ``viral'' and spread throughout the population. However, it has been shown that there are types of spread, specifically those that require social reinforcement { or multiple contacts}, which are not well captured by these ``simple'' epidemic models \cite{Centola1194,centolabook,guilbeault2018complex}. { For example, knowledge of a job opening spreads more widely through a network of acquaintances \cite{granweak} while unproven or risky technology would require social validation before adoption \cite{centola2007complex}.} As many behavioral diffusion processes are known to depend on social interaction \cite{rogers2010diffusion,christakis2008collective,christakis2007spread}, this has led to the study of ``complex'' contagions \cite{centolabook}, which take these peer effects into account. 

    Complex contagions are commonly studied using threshold models \cite{valente1996network,schelling1971dynamic,Centola1194,granovetter1978threshold} { or conceptual extensions of the epidemic models such as the independent cascade model \cite{Kempe_2003,shakarian2015independent}}. In a threshold model, an agent adopts if a given number \cite{granovetter1978threshold} or fraction \cite{watts2002simple} of their neighbors has adopted the innovation, which captures social reinforcement effects.

    { This paper, following the approach of \cite{ruf2017opinion}, modifies an epidemic adoption model to allow an opinion dynamic, which captures social reinforcement effects, to affect the spread parameters of the epidemic model. By doing so, a ``simple'' epidemic model is modified to capture ``complex'' adoption behavior. Explicitly modeling the dynamics of social reinforcement allows their effects to change over time and illustrates how different models of social reinforcement, i.e. varying opinion dynamics, influence the spread of a product.}  
    
The coupling between an epidemic model and an opinion dynamic draws a link between the literature on epidemic models \cite{kermack1932contributions,FallMMNP07,VM2009,khanafer2016stability,pare2017epidemic,nowzari2016analysis,Funk2009,human2,Arenas_2013,Paarporn_tcss,liu2017human}. 
and the literature on opinion dynamic models \cite{proskurnikov2018tutorial,abelson1964mathematical,degroot1974reaching,olfati2004consensus,ren2005consensus,olfati2007consensus,nedich2015convergence}. The choice of a single virus epidemic model also means that the presented model extends the literature on the diffusion of innovations \cite{ryan1943diffusion, bass1969new, rogers1971communication,rogers2010diffusion, valente1996network,kiesling2012agent}, which considers the spread of a new product that does not yet have any competitors.

{ 
State-dependent epidemic spreading has received growing attention in recent years \cite{nowzari2016analysis}. For example, there is interest in studying spreading processes when coupled with human awareness, in which awareness and information hinders disease spread \cite{Funk2009,human2,Arenas_2013,Paarporn_tcss,liu2017human}. The primary contributions of these studies illustrate, through equilibrium stability analyses, how awareness and social distancing can mitigate epidemic spreading. A major difference in the proposed model from the above studies is that in the proposed model opinions can promote as well as hinder the spread of innovation.
}

{
There is also a conceptually similar notion of the bi-virus model which studies two competing SIS viruses \cite{nowak1991evolution,prakash2012winner,wei2013competing,sahneh2014competitive,liu2016onthe,arxiv,watkins2016optimal}.
In \cite{nowak1991evolution}, Nowak introduced the idea of competing SIS models with no graph structure but three groups: 1) susceptible, 2) infected with virus one, and 3) infected with virus two. 
 
Watkins {\em et al.}, provide a necessary and sufficient condition for local exponential stability of the origin for two competing heterogeneous viruses over strongly connected graphs in \cite{watkins2016optimal}. A geometric program is also formulated to control the spread of the virus.
In \cite{liu2016onthe,arxiv}, Liu {\em et al.} provide global analysis for the healthy and epidemic states for the bi-virus model over strongly connected graphs and investigate distributed control techniques.
These models are related to the model proposed here because they are layered networks, they modify the spread of the virus, and they allow for more complex behavior, such as a possible spectrum of endemic equilibria \cite{liu2016stability}.

However, the model presented in this paper is distinct in the coupling between the process dynamics as well as the underlying dynamics associated with each layer. 
}
    
   This paper considers the susceptible-infected-susceptible (SIS) epidemic model as the underlying adoption mechanism \cite{FallMMNP07,VM2009,khanafer2016stability,pare2017epidemic,nowzari2016analysis,Zhang_2014}. { We select the SIS compartmental model over others, e.g. SIR \cite{Fibich_2016,shakarian2015sir}, because we interpret an agent as being able to alternate between adopter and non-adopter over time. In an SIR model, once the agent stops using a technology, they can no longer start again.}
  
   For example, an agent could adopt a product, then dislike and therefore discard said product. However, if their network neighbors begin to use and like the product, their opinion might change to the point that they re-adopt. 
    { The evolution of individual opinions is often modeled as consensus opinion dynamics,} which originated in mathematical sociology \cite{abelson1964mathematical,degroot1974reaching} and has since been explored extensively in the controls literature \cite{olfati2004consensus,ren2005consensus,olfati2007consensus,nedich2015convergence,proskurnikov2018tutorial,nedic2008constrained}.
      Consensus dynamics describe information exchange in human and robotic systems \cite{mesbahi2010graph} and have been used to predict opinion change in online experiments\cite{Friedkin11380,BeckerE5070}. In this paper we focus on understanding the impact of consensus opinion dynamics on epidemic spreading. We also extend our consideration to the continuous time bounded confidence opinion model \cite{hegselmann2002opinion,canuto2008eulerian,blondel2010continuous}, which allows links to be severed in the opinion graph.  
    
    There are few existing models that combine adoption with opinion dynamics. 
    In \cite{kalish1985product} a coupled adoption and awareness model is proposed that includes advertising. Similar to \cite{kermack1932contributions}, this model assumes no graph structure, and models the system with only two differential equations, aggregating the population into two groups, susceptible (non-adopters) or infected (adopters). 
Introduced in \cite{Martins2009}, the Continuous Opinion Discrete Action (CODA) model captures discrete product adoption with Bayesian opinion updates, which do not depend on their, or their neighbors', opinions but only on the adoption actions of their network neighbors. 
      
    The specific contributions of this paper include:
    \begin{itemize}
    \item We extend the results of the conference paper \cite{ruf2017opinion} 
    { 
    by modifying the model to include a sensible scaling factor and analyzing the global behavior
    }of the proposed model, providing  conditions for the global stability of the { none-adopt equilibrium in the case of a desirable product and the all-adopt equilibrium in the case of an exceptional product. } 
    We extend this characterization to the case of the bounded confidence opinion dynamic. 
    \item We characterize a class of unstable equilibria { in the case of an exceptional product.} We show via simulation that these equilibria (when they exist in the interior of the state space) cause the evolution of the system to become initial-condition dependent.  
    \end{itemize}
{ These stability results lay a foundation to understanding complex opinion-adoption dynamics through viral models. These insights provide future opportunities to examine questions about control - what aspects of this dynamical process should be influenced, and how, in order to achieve desirable global behaviors.  }
\vspace{-3ex}

\subsection{Notation}

For a vector $x(t)$, $\dot{x}$ indicates its time derivative. We use $1_N$ and $0_N$ to indicate vectors in $\mathbb{R}^N$ of all ones and zeros, respectively. The norm operator $\|\cdot \|$ is the Euclidean or 2-norm. For a matrix $A$, $\sigma(A)$ is the set of eigenvalues of $A$ and $\alpha(A) := \max \left\{{\rm Re}(\lambda)\ : \ \lambda\in\sigma(A)\right\}$. A diagonal matrix with its $ii$th entry being $x_i$
is denoted by $\mathrm{diag}\left(x_{i}\right)$. We define $(0,1)^{N}$ as the $N$ Cartesian products of the interval $(0,1)$ and $[0,1]^{N}$ as the $N$ Cartesian products of the interval~$[0,1]$.

\vspace{-1ex}

\section{Model}\label{sec:model}

We modify the standard network dependent SIS epidemic ODE dynamics to incorporate the coupling between the ``epidemic-like'' spread of an innovation and opinion update dynamics. The adoption dynamics occur over a weighted, directed network $\mcal{G}_A$ of $N$ nodes, or subpopulations. The opinion dynamics occur over a weighted digraph $\mcal{G}_O$ with the same node set as $\mcal{G}_A$, but whose edges may or may not coincide with $\mcal{G}_A$. This captures that for a given node, the other nodes with which the node discusses the product can be distinct from the group of nodes which are observed using the product. The neighborhood set of node $i$ is denoted as $\mcal{N}^X_i$ for $X=A,O$. 

Each node, or subpopulation, $i$ has a proportion of agents that have adopted the product $x_i\in[0,1]$. The subpopulation represented by node $i$ also has an overall average opinion $o_i\in[0,1]$, modeling how much the subpopulation values the product ($o_i=0$ means the subpopulation is averse to the product, $o_i=1$ means very receptive to the product). 

 Networked epidemic models have two different interpretations: the node state can correspond to 1) the probability of an individual being infected \cite{VM2009}, or 2) the proportion of a subpopulation that is infected \cite{FallMMNP07,bivirus_tac}. In this paper we use the latter interpretation since adoption is a binary action, making this work distinct from the body of work that explores modeling adoption as a discrete process \cite{granovetter1978threshold,valente1996network,Martins2009}.
 
 The adoption dynamics for each node evolve as a function of time:
 \vspace{-1ex}
    \begin{align}
    \dot{x}_i &= f_i(x,o) \nonumber \\
    &\equiv -\delta_i x_i (1-o_i) + (1-x_i)o_i\left( \sum_{\mcal{N}_i^A}\beta_{ij} x_j +\beta_{ii} \right) \label{eq:prod} 
    \end{align}
where $\delta_i > 0$ is the drop rate for subpopulation $i$, $\beta_{ij} \geq 0$ is the exogenous adoption rate, and $\beta_{ii}\geq 0$ is the endogenous adoption rate. The parameters $\beta_{ij}$ are the weights on the adoption graph. This model captures that a subpopulation's opinion towards a product directly impacts their proclivity to adopt or to stop using a product. Here a node has an effective disadoption rate of $\delta_{i}(1-o_{i})$ and an effective adoption rate of $o_{i}\left( \sum_{\mcal{N}_i^A}\beta_{ij} x_j +\beta_{ii} \right)$. 
\begin{assumption}\label{ass:beta}
It holds that $\beta_{ii}>0, \forall{i}$. 
\end{assumption}
Assumption \ref{ass:beta} ensures that should a subpopulation have a high opinion of the product, some of the subpopulation will adopt the product even if no network neighbors have adopted; i.e. 
that a subpopulation never ignores their opinion. 

The primary opinion dynamic model that will be considered in conjunction with the adoptive spread model in \eqref{eq:prod} is the canonical Abelson model, which in the 1960s laid the foundation for the study of opinion dynamics \cite{abelson1964mathematical} and which has also been studied in the controls community as the consensus protocol \cite{mesbahi2010graph}. { Here the opinion dynamic is treated as occurring at a subpopulation level, which is not commonly done in the opinion dynamics literature. The communities discussed in this case are conceptually similar to the marketing theory of brand communities \cite{zaglia2013brand,muniz2001brand}, where the community shares a common identity which impacts how they respond to a brand or product. As these communities have increased within group communication with regards to the product and the individual agents generate self image through membership, the agents would be expected to have similar opinions within a community. As such we here use the mean opinion to approximate the behavior of the group. Agents in the network can take part in multiple communities with regard to a given product \cite{bagozzi2012customer}, here we assume each agent is modeled as being part of their dominant community.} 

The modified Abelson dynamics follow
 \begin{align}
 	\dot{o}_i &= g_i(x,o) \equiv \sum_{j\in\mcal{N}_i^O} w_{ij}^{o}(o_j-o_i)+w_{i}^{x}\left({\gamma_{i}} x_{i}-o_{i}\right) \label{eq:smod2},
 \end{align}
 where $w_{ij}^o\geq 0$ is the weight on the opinion network, {$w_{i}^{x}\geq0$ {is a weight that represents the responsiveness of the community to their opinion, and $\gamma _{i}\in[-1,1]$ is a scaling factor describing response of a community to adoption, which is influenced by product quality as well as how a product interacts with a community identity, i.e. a very good Android might not affect a community of iPhone users.}} The final term of the modified opinion dynamics, $w_{i}^{x}\left({ \gamma_i}x_{i}-o_{i}\right)$, captures that a node's behavior must impact their opinion. { The relationship between behavior and opinion, especially at the scale of a population, is difficult to capture and is likely highly nonlinear. Understanding the interplay between opinion and behavior is a central question in the social sciences, including such concepts as cognitive dissonance \cite{festinger1962theory} and attitude-behavior consistency \cite{petty2018attitudes, smith1983attitude} as well as many others. For the purpose of this initial exploration of the coupling of opinion and behavior, we draw inspiration from the model of Taylor \cite{taylor1968towards}, and treat the behavior of the population as a source of external information which influences the opinion of the population.} Taken as a whole, \eqref{eq:smod2} models the fact that a node's opinion is affected by its network neighbors' opinions and its own adoption level.
 
 {
 \begin{assumption}\label{ass:w}
 There exists $i$ such that $w_{i}^{x}>0$.
 \end{assumption}
 Assumption \ref{ass:w} ensures that there is a coupling between the adoption state and the opinion somewhere in the network; essentially that there exists a node that cannot have adopted a product without having their opinion affected by the adoption. Some proofs require the stronger claim that $w_{i}^{x}>0,~\forall i$, which is reasonable given the interpretation of each node as a subpopulation defined by their reaction to a product.} 
{
 \begin{assumption}\label{ass:g}
 It holds that $\gamma_{i}>0,~\forall i$.
 \end{assumption}
 Assumption \ref{ass:g} captures that this paper describes the behavior of desirable products. One could imagine a poorly designed product could have a negative impression on opinion, which is related to \cite{ruf2019antag} where antagonism in the social network can cause negative opinions. 
 \begin{assumption}\label{ass:og}
 The opinion graph, $\mathcal{G}_{O}$ is strongly connected. 
 \end{assumption}
 }

Translating the opinion into vector form, shows that the opinion dynamic satisfies 
\vspace{-1ex}
\begin{equation}
\dot{o}=W{\Gamma}x-(\mathcal{L}_{o}+W)o,\label{eq:ovec}
\end{equation}
where $\mathcal{L}_{O}$ is the weighted in-degree graph Laplacian of the opinion network, $W=\mathrm{diag}\left(w_{i}^{x}\right)$, and {$\Gamma=\mathrm{diag}\left(\gamma_{i}\right)$}. The Laplacian $\mathcal{L}_{O}=D-A$ where $D=\mathrm{diag}(d_{i})$ is the in-degree matrix and $A$ is the adjacency matrix of the opinion graph.

By combining \eqref{eq:prod} and \eqref{eq:smod2}, we have that the combined adoption-opinion dynamic follows
\vspace{-1.5ex}
\begin{align}
\dot{x}_{i}&=-\delta_i x_i (1-o_i) + (1-x_i)o_i\left( \sum_{\mcal{N}_i^A}\beta_{ij} x_j +\beta_{ii} \right), \\
\dot{o}_i &= \sum_{j\in\mcal{N}_i^O} w_{ij}^{o}(o_j-o_i)+w_{i}^{x}\left({\gamma_{i}}x_{i}-o_{i}\right) \label{eq:fullmodel}.
\end{align}
It is assumed the initial conditions $x_i(0),o_i(0) \in [0,1] \ \forall i$ are known. As will be shown in the subsequent section, $x_i(0),o_i(0) \in [0,1] \ \forall i$ implies $x_i(t),o_i(t) \in [0,1] \ \forall i,t\geq 0$. Hence, $x_i(t)$ and $o_i(t)$ are functions from $[0,\infty)$ to $[0,1]$. When convenient, we denote the aggregate $2N$-state vector by $z = [x^T,o^T]^T$.  

\vspace{-1.5ex}

\section{Analysis}\label{sec:analysis}

For the coupled adoption opinion model in \eqref{eq:fullmodel}, each $x_{i}$ represents the proportion of the $i$th subpopulation  that has adopted the product and each $o_{i}$ is a scaled average opinion of the $i$th subpopulation. Consequently, the proposed model is only meaningful for $x_{i},o_{i} \in [0,1]$. As such we first establish well-posedness of the model. 

\begin{proposition}
	For the model in \eqref{eq:fullmodel}, if $z(0)\in[0,1]^{2N}$, then $z(t) \in [0,1]^{2N}$ for all $t \geq 0$.
\end{proposition}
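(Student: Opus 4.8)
The plan is to show that the hypercube $[0,1]^{2N}$ is forward invariant by verifying a subtangentiality (Nagumo) condition on its boundary. Since the right-hand side of \eqref{eq:fullmodel} is polynomial in $(x,o)$, it is locally Lipschitz, so solutions exist and are unique on a maximal interval of existence; it then suffices to show that no trajectory starting in the cube can cross a boundary face. By Nagumo's theorem this holds provided the vector field points into (or is tangent to) $[0,1]^{2N}$ at every boundary point, which for the axis-aligned faces of the cube reduces to checking the sign of a single component of the vector field while the remaining coordinates are held in $[0,1]$.

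First I would examine the two adoption faces. On $\{x_i=0\}$ the drop term vanishes and $\dot{x}_i = o_i\!\left(\sum_{\mcal{N}_i^A}\beta_{ij}x_j+\beta_{ii}\right)\ge 0$, since $o_i,x_j,\beta_{ij}\ge 0$ and $\beta_{ii}>0$ by Assumption~\ref{ass:beta}; hence $x_i$ cannot decrease through $0$. On $\{x_i=1\}$ the adoption term vanishes and $\dot{x}_i = -\delta_i(1-o_i)\le 0$, since $\delta_i>0$ and $o_i\le 1$; hence $x_i$ cannot increase past $1$.

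Next I would treat the two opinion faces, where the hypotheses on $\gamma_i$ enter. On $\{o_i=0\}$ one has $\dot{o}_i = \sum_{j\in\mcal{N}_i^O} w_{ij}^{o}o_j + w_i^{x}\gamma_i x_i \ge 0$, using $w_{ij}^{o},o_j\ge 0$ together with $\gamma_i>0$ (Assumption~\ref{ass:g}), $w_i^x\ge 0$, and $x_i\ge 0$. On $\{o_i=1\}$ one has $\dot{o}_i = \sum_{j\in\mcal{N}_i^O} w_{ij}^{o}(o_j-1) + w_i^{x}(\gamma_i x_i-1)\le 0$, because each $o_j-1\le 0$ and, since $\gamma_i\le 1$ and $x_i\le 1$, also $\gamma_i x_i-1\le 0$. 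Thus $o_i$ remains in $[0,1]$.

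The verification at corners and lower-dimensional faces is then automatic: the tangent cone to the cube is the product of the one-dimensional tangent cones, and each scalar condition above has been checked independently of the others. I do not anticipate a genuine obstacle, and the only point requiring care is that the opinion bounds rely on the full range $\gamma_i\in[0,1]$, with nonnegativity (Assumption~\ref{ass:g}) securing the lower face and $\gamma_i\le 1$ securing the upper face; it is worth flagging that an antagonistic product with $\gamma_i<0$ would not be covered by this invariance argument without modification. Finally, since $[0,1]^{2N}$ is compact and invariant, solutions cannot escape to infinity in finite time, so they are defined for all $t\ge 0$, which completes the proof.
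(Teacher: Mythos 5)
Your proposal is correct and follows essentially the same route as the paper: both arguments reduce to checking the sign of the relevant component of the vector field on each boundary face of $[0,1]^{2N}$ (with Assumption~\ref{ass:beta} securing the face $x_i=0$ and the bounds $0<\gamma_i\le 1$ securing the two opinion faces), the only difference being that you package the conclusion via Nagumo's subtangentiality condition while the paper uses a first-exit-time contradiction. Your explicit treatment of the opinion faces and of global-in-time existence via compactness is slightly more complete than the paper's ``analogous arguments apply,'' but the underlying idea is identical.
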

\begin{proof}
	Observe that \eqref{eq:fullmodel} is a system of polynomial ODEs over the compact state space $[0,1]^{2N}$. This implies that the system of ODEs in \eqref{eq:fullmodel} is Lipschitz on $[0,1]^{2N}$ and as such the solutions $z_i(t)$ are continuous for all $i\in\{1,\ldots,2N\}$. 
    
    Suppose the proposition is not true. Then there is an index $i\in\{1,\ldots,2N\}$ such that $z_i(t)$ is the first state to go outside $[0,1]$. Consider the case where $i\in\{1,\ldots,N\}$, i.e. the adoption variable $x_{i}$ leaves $[0,1]$. If $x_{i}$ becomes negative then there exists a time $s_0 > 0$ such that $x_i(s_0) = 0$, $\dot{x}_i(s_0) < 0$, $z_j(t) \in [0,1] \ \forall t\in[0,s_0]$ and $\forall j\neq i$. However by \eqref{eq:fullmodel}, 
    \vspace{-1ex}
    \begin{equation}
    	\dot{x}_i(s_0) = o_i\left( \sum_{\mcal{N}_i^A}\beta_{ij} x_j +\beta_{ii} \right) \geq 0,
    \end{equation}
giving a contradiction. To show $x_i$ cannot exceed one, we apply similar arguments and observe that 
\vspace{-1ex}
	\begin{equation}
    	\dot{x}_i(s_0) = -\delta_i(1-o_i) \leq 0.
    \end{equation}
This equality would contradict $\dot{x}_i(s_0) > 0$, which is required for $x_i$ to exceed one.

Analogous arguments apply to show that when $i\in\{N+1,\ldots,2N\}$ $z_{i}$ cannot leave $[0,1]$, i.e. that $o_{i-N}$ cannot go below zero nor above one. { In particular, if $\gamma_{i}\leq 1, \forall i$ then $o_{i-N}$ can not go above one and if $\gamma_{i}\geq 0, \forall i$ then $o_{i-N}$ can not go below zero.}
\end{proof}

Having shown the well-posedness of the adoption model, we now discuss properties of the adoptive spread model by considering the partial derivatives of the function in \eqref{eq:prod}. Note
\vspace{-1ex}
\begin{align}
	\frac{\partial{f_i}}{\partial x_i} &= -\delta_i(1-o_i) - o_i\left(\sum_{\mcal{N}_i^A} \beta_{ij} x_j +\beta_{ii}\right)  \label{eq:dfdxi},
\end{align}
which is always negative when $z\in[0,1]^{2N}$ since $\beta_{ij},\delta_{i}~\geq~0$
{ and $\beta_{ii} >0$ by Assumption \ref{ass:beta}}. 
The other set of partial derivatives with respect to $x$ is  
\begin{align}
	\frac{\partial{f_i}}{\partial x_j} &= \begin{cases} (1-x_i)o_i\beta_{ij} \ &\text{if} \ j\in\mcal{N}_i^A, j\neq i \\ 0 &\text{if} \ j\notin \mcal{N}_i^A \cup \{i\}, \end{cases} \label{eq:dfdxj}
\end{align}
which is always non-negative when $z\in[0,1]^{2N}$ as $\beta_{ij}\geq 0$.
We also have 
\vspace{-1ex}
\begin{align}
	\frac{\partial{f_i}}{\partial o_i} &= \delta_i x_i + (1-x_i)\left(\sum_{\mcal{N}_i^A} \beta_{ij} x_j +\beta_{ii}\right) ,\label{eq:dfdoi}
    \end{align}
    which is always non-negative when $z\in[0,1]^{2N}$ since $\beta_{ij},\delta_{i} \geq 0$. Finally,
    \vspace{-2ex}
    \begin{align}
	\frac{\partial{f_i}}{\partial o_j} &=  0 \ \forall j \neq i.\label{eq:dfdoj}
\end{align}
As in the classic SIS epidemic model, the adoption of network neighbors encourages the consumer to adopt. In the new coupled model, the opinions of the consumers modify the impact of adoption in \eqref{eq:dfdxi} and encourage adoption via \eqref{eq:dfdoi}. 

Consider the behavior of the opinion dynamic model via the partial derivatives of the function in \eqref{eq:smod2}.
\begin{align}
\frac{\partial{g_i}}{\partial x_i} &= w_{i}^{x} { \gamma_{i}}
\label{eq:dg'dxi} \\
\frac{\partial{g_i}}{\partial x_j} &= 0 \ \forall j\neq i\label{eq:dg'dxj} \\
\frac{\partial{g_i}}{\partial o_i} &= -d^{O}_{i}-w_{i}^{x} \label{eq:dg'doi} \\
\frac{\partial{g_i}}{\partial o_j} &= \begin{cases} 1 \ &\text{if} \ j\in\mcal{N}_i^O \ \forall j\neq i\\ 0 &\text{if} \ j\notin \mcal{N}_i^O \cup \{i\}, \end{cases} \label{eq:dg'doj}
\end{align}
where $d^{O}_{i}$ is the (in)degree of the $i$th node in the opinion network. Here the node's adoption state and the opinions of their network neighbors affect the opinion of the node. 

This system { always has an equilibrium at $z^* =0_{2N}$, the case where no one adopts the product and everyone has an opinion equal to zero, which we refer to as the ``flop'' or ``none-adopt'' equilibrium. In the case of a high quality product, when  $\gamma_{i}=1,~\forall i$, then $z^* =1_{2N}$ is also an equilibrium point, i.e. everyone adopts the product and has an opinion equal to one, the ``hit'' or ``all-adopt'' equilibrium. If $\exists i, \gamma_{i}< 1$, then this effects the equilibria as shown in the following lemma. 
\begin{lemma}
If $\exists i, \textrm{ s.t. } \gamma_{i}<1$ and $w_{i}^{x}>0$, then $1_{2N}$ is no longer an equilibrium point. 
\end{lemma}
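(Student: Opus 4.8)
The plan is to verify directly that the candidate point $z^* = 1_{2N}$ fails the equilibrium condition $\dot{z} = 0$ by exhibiting a single coordinate of the vector field that does not vanish there. Since an equilibrium requires every component of both $\dot{x}$ and $\dot{o}$ to be zero simultaneously, it suffices to locate one opinion coordinate whose time derivative is strictly nonzero when $x = 1_N$ and $o = 1_N$.

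First I would substitute $x_i = 1$ and $o_i = 1$ (for all $i$) into the adoption dynamics \eqref{eq:prod}. Because the disadoption term carries the factor $(1 - o_i) = 0$ and the adoption term carries the factor $(1 - x_i) = 0$, every $\dot{x}_i$ vanishes at $1_{2N}$ independently of $\Gamma$, so the adoption block gives no obstruction. The decisive computation is for the opinion dynamics \eqref{eq:smod2}. Evaluating $g_i$ at $o_j = o_i = 1$ makes the consensus sum $\sum_{j \in \mcal{N}_i^O} w_{ij}^o (o_j - o_i)$ collapse to zero, leaving only the coupling term $w_{i}^{x}(\gamma_{i} x_{i} - o_{i}) = w_{i}^{x}(\gamma_{i} - 1)$.

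For the distinguished index $i$ in the hypothesis, $w_{i}^{x} > 0$ and $\gamma_{i} < 1$, so $w_{i}^{x}(\gamma_{i} - 1) < 0$; hence $\dot{o}_i < 0 \neq 0$ at $1_{2N}$. This single nonvanishing coordinate contradicts the equilibrium requirement, establishing that $1_{2N}$ is not an equilibrium point. There is essentially no obstacle here---the result is a direct substitution---and the only point worth emphasizing is conceptual: the scaling factor $\gamma_{i}$ enters solely through the opinion-adoption coupling, so a product that is less than maximally well received ($\gamma_{i} < 1$) at any genuinely coupled node ($w_{i}^{x} > 0$) drives that node's opinion strictly downward even when everyone is adopting and maximally enthusiastic, precluding the all-adopt/all-favorable state from being stationary.
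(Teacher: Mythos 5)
Your proof is correct and follows essentially the same route as the paper's: evaluate the opinion dynamic at $1_{2N}$, observe that the consensus sum vanishes, and conclude $\dot{o}_i = w_i^x(\gamma_i - 1) < 0$ for the distinguished node. The extra check that the adoption block vanishes at $1_{2N}$ is harmless but not needed, since a single nonvanishing coordinate already rules out the equilibrium.
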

\begin{proof}
Suppose node $i$ satisfies $ \gamma_{i}<1$ and $w_{i}^{x}>0$. Consider the opinion dynamic of node $i$ at $1_{2N}$.
\begin{align}
\dot{o}_{i}&=\sum_{j\in\mcal{N}_i^O} w_{ij}^{o}(o_j-o_i)+w_{i}^{x}\left(\gamma_{i} x_{i}-o_{i}\right)\\
&=w_{i}^{x}\left(\gamma_{i} x_{i}-o_{i}\right)<0.
\end{align}
\vspace{-2ex}
\end{proof}

Simulations have shown that if $\gamma_{i}<1,~ w_{i}^{x}>0 ~\forall i$ there can exist a stable equilibrium point on $(0,1)^{2N}$. At such equilibria  $ o_{i}^{\ast}$ is close to $\gamma_{i}x_{i}^{\ast}$, where close depends on the system parameters and underlying graph structures. Section \ref{sec:sim} shows a case where $\gamma_{i}x_{i}^{\ast}=o_{i}^{\ast}.$ The parameter $\gamma_{i}$ can also have an impact on the stability of $0_{2N}$ as shown by the following lemma. 
\begin{lemma}\label{lem:loc0}
The equilibrium point $z=0_{2N}$ is locally stable if $\forall i, \delta_{i}>\beta_{ii}$ or if $\forall i, w^{x}_{i}>0, \ \ \delta_{i}>\gamma \beta_{ii},$ where $\gamma=\max_{i} \gamma_{i}.$
\end{lemma}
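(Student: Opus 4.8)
The plan is to apply Lyapunov's indirect method: I would compute the Jacobian $J$ of the combined dynamics \eqref{eq:fullmodel} at $z=0_{2N}$ and show it is Hurwitz, i.e. $\alpha(J)<0$, under either set of hypotheses. Evaluating the partial derivatives \eqref{eq:dfdxi}--\eqref{eq:dg'doj} at the origin (where all $x_i=o_i=0$) collapses every cross term, leaving the block form
\begin{equation}
J=\begin{bmatrix} -\mathrm{diag}(\delta_i) & \mathrm{diag}(\beta_{ii}) \\ W\Gamma & -(\mathcal{L}_O+W)\end{bmatrix},
\end{equation}
where I read the lower blocks directly off the vector form \eqref{eq:ovec}. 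The key structural observation is that $J$ is Metzler: the diagonal blocks contribute nonpositive diagonals and the opinion block contributes off-diagonals $w_{ij}^o\ge 0$, while the two coupling blocks $\mathrm{diag}(\beta_{ii})$ and $W\Gamma=\mathrm{diag}(w_i^x\gamma_i)$ are nonnegative; here nonnegativity of $W\Gamma$ relies on $\gamma_i>0$ (Assumption \ref{ass:g}). For a Metzler matrix $\alpha(J)$ is a real (Perron) eigenvalue, so it suffices to exhibit a vector $v\gg 0$ with $Jv\le 0$ and then rule out $\alpha(J)=0$.

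For the second condition I would look for $v=[a1_N^T,\,b1_N^T]^T$ with scalars $a,b>0$. Since $\mathcal{L}_O 1_N=0$, the opinion rows give $(Jv)_{N+i}=w_i^x(\gamma_i a-b)$ and the adoption rows give $(Jv)_i=-\delta_i a+\beta_{ii}b$. Both are strictly negative for every $i$ exactly when $\gamma a<b<a\min_i(\delta_i/\beta_{ii})$, and such a $b$ exists iff $\gamma<\min_i(\delta_i/\beta_{ii})$, i.e. $\delta_i>\gamma\beta_{ii}$ for all $i$; strictness of the opinion rows uses $w_i^x>0$ for all $i$. Pairing $Jv\ll 0$ with a nonnegative left Perron eigenvector then forces $\alpha(J)<0$, giving local asymptotic stability.

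For the first condition, $\delta_i>\beta_{ii}$, the natural choice $v=1_{2N}$ yields $(Jv)_i=-\delta_i+\beta_{ii}<0$ on the adoption rows but only $(Jv)_{N+i}=w_i^x(\gamma_i-1)\le 0$ on the opinion rows (using $\gamma_i\le 1$), so the same Perron argument delivers only $\alpha(J)\le 0$. The remaining, and genuinely harder, step is to upgrade this to strict negativity by showing $J$ is nonsingular. Taking the Schur complement of the invertible block $-\mathrm{diag}(\delta_i)$ reduces $\det J$ to the determinant of $\mathcal{L}_O+\mathrm{diag}(s_i)$ with $s_i=w_i^x(\delta_i-\gamma_i\beta_{ii})/\delta_i$; the inequality $\delta_i>\beta_{ii}\ge\gamma_i\beta_{ii}$ makes each $s_i\ge 0$, and Assumption \ref{ass:w} guarantees $s_{i_0}>0$ for some $i_0$. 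The main obstacle is then the standard but essential fact that a strongly connected graph Laplacian perturbed by a nonnegative, not-identically-zero diagonal is a nonsingular (irreducible) $M$-matrix, and this is precisely where Assumption \ref{ass:og} (strong connectivity) enters. Concluding $\det J\ne 0$ rules out $\alpha(J)=0$, so $\alpha(J)<0$ and the origin is locally asymptotically stable.
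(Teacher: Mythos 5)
Your proposal is correct and follows the same overall strategy as the paper (linearize at the origin, show the block Jacobian $\left[\begin{smallmatrix}-\mathrm{diag}(\delta_i) & \mathrm{diag}(\beta_{ii})\\ W\Gamma & -(\mathcal{L}_O+W)\end{smallmatrix}\right]$ is Hurwitz), and for the second condition your test vector $v=[a1_N^T,\,b1_N^T]^T$ with $\gamma a<b<a\min_i(\delta_i/\beta_{ii})$ is essentially the paper's choice $y=[1_N^T,\,\rho\gamma 1_N^T]^T$ with $\rho>1$, combined with the same Metzler eigenvalue bound (the paper's Lemma on irreducible Metzler matrices; your left-Perron-eigenvector pairing is a mild generalization that does not even need irreducibility). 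Where you genuinely diverge is the nonsingularity step under the first condition $\delta_i>\beta_{ii}$: the paper works directly on the full $2N\times 2N$ Jacobian, observing that the adoption rows are strictly diagonally dominant and that Assumptions 2 and 4 provide a chain from every non-strict opinion row to a strict row, so the matrix is weakly chained diagonally dominant and hence nonsingular by the Shivakumar--Chew lemma. You instead take the Schur complement of the invertible block $-\mathrm{diag}(\delta_i)$ and reduce the question to nonsingularity of $\mathcal{L}_O+\mathrm{diag}(s_i)$ with $s_i=w_i^x(\delta_i-\gamma_i\beta_{ii})/\delta_i\geq 0$ and $s_{i_0}>0$, which is the irreducible-$M$-matrix (Taussky) fact. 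The two mechanisms are close cousins --- both are irreducible/chained diagonal dominance with at least one strict row, and both consume exactly Assumptions 2 and 4 --- but your reduction to an $N\times N$ Laplacian perturbation is arguably cleaner and makes explicit where the coupling weight $w_i^x$ and the gap $\delta_i-\gamma_i\beta_{ii}$ enter, while the paper's WCDD argument avoids any matrix inversion and handles both blocks uniformly.
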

All stability proofs have been moved to the appendix to facilitate exposition.}

{
We introduce the following notation
\begin{equation}\label{eq:omega}
	\Omega_i(\tau) \equiv \sum_{\mcal{N}_i^A} \beta_{ij}\tau+\beta_{ii} \mbox{ for } i\in\{1,\ldots,N\},
\end{equation}
which captures a node's maximal adoption rate when network neighbors have adoption $x_{j}\leq\tau$. 

\begin{theorem} \label{thm:0global}
If $\delta_{i}>\Omega_{i}(\tau), ~w_{i}^{x}>0, 
\ \forall i$, then $0_{2N}$ is asymptotically stable on $[0,\tau]^{N}\times[0,1]^{N}$ for $\tau<1$ and $[0,1]^{2N}\setminus\{1_{2N}\}$ for $\tau=1$.
\end{theorem}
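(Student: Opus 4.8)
The plan is to exploit the monotone (cooperative) structure of \eqref{eq:fullmodel} together with a weighted-max Lyapunov function. The partial derivatives computed above show that every off-diagonal entry of the Jacobian of $(f,g)$ is nonnegative on $[0,1]^{2N}$: the terms $\partial f_i/\partial x_j$, $\partial f_i/\partial o_i$, $\partial g_i/\partial o_j$ are $\ge 0$, and $\partial g_i/\partial x_i = w_i^x\gamma_i \ge 0$ by Assumption \ref{ass:g}. Hence the Jacobian is Metzler and the flow $\phi_t$ of \eqref{eq:fullmodel} is order-preserving on $[0,1]^{2N}$. I would pair this with the candidate $V(z) = \max_i\{x_i, o_i\}$, whose upper Dini derivative is governed by whichever coordinate currently attains the maximum.

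First I would bound $D^+V$ in two cases. If the active coordinate is some $x_k$, then $o_k \le x_k$ and $x_j \le x_k$ for all $j$, so $\sum_{\mcal{N}_k^A}\beta_{kj}x_j + \beta_{kk} \le \Omega_k(x_k)$ by \eqref{eq:omega}; since $\dot x_k$ is increasing in $o_k$, replacing $o_k$ by its upper bound $x_k$ yields $\dot x_k \le x_k(1-x_k)\bigl(\Omega_k(x_k)-\delta_k\bigr)$. If instead the active coordinate is some $o_k$, then $x_k \le o_k$ and $o_j \le o_k$, so the consensus sum is nonpositive and the coupling term satisfies $w_k^x(\gamma_k x_k - o_k) \le w_k^x o_k(\gamma_k-1) \le 0$ using $\gamma_k \le 1$. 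Thus $D^+V \le 0$ precisely on the region where $\max_i x_i$ stays below the threshold $\rho_i$ defined by $\Omega_i(\rho_i)=\delta_i$; note $\delta_i > \Omega_i(\tau)$ forces $\rho_i > \tau$.

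For $\tau = 1$ this closes immediately: $\delta_i > \Omega_i(1)$ gives $\rho_i > 1 \ge x_i$, so the $x_k$-case bound is nonpositive for every admissible state and $V$ is nonincreasing on all of $[0,1]^{2N}$ (which is invariant by the Proposition). I would then characterize the equilibria: evaluating the adoption balance at the largest adoption component $\bar x = x_k^\ast = \max_i x_i^\ast$ of any equilibrium and using the opinion relation $o_k^\ast \le \bar x$ gives $\delta_k \le \Omega_k(\bar x)$ whenever $\bar x \in (0,1)$, contradicting $\delta_k > \Omega_k(1)$; hence the only equilibria in $[0,1]^{2N}$ are $0_{2N}$ and, only when $\gamma_i = 1\ \forall i$, the excluded corner $1_{2N}$. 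LaSalle's invariance principle applied to $V$ on the compact invariant set $[0,1]^{2N}\setminus\{1_{2N}\}$ then forces convergence to $0_{2N}$, with Lyapunov stability coming from $V$ itself (or from Lemma \ref{lem:loc0}, since $\delta_i>\Omega_i(\tau)\ge\beta_{ii}$).

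For $\tau < 1$ I would first use monotonicity to reduce the claim: since any $z(0)$ in the region satisfies $0_{2N} \le z(0) \le \bar z_0 := (\tau 1_N^\top, 1_N^\top)^\top$ and $\phi_t(0_{2N}) = 0_{2N}$, order preservation gives $0_{2N} \le z(t) \le \bar z(t)$ with $\bar z(t) := \phi_t(\bar z_0)$, so it suffices to show the single corner trajectory $\bar z(t) \to 0_{2N}$. The main obstacle is exactly here: because $o_i(0)=1$ while $x_i(0) \le \tau$, the drop term vanishes and $\dot{\bar x}_i = (1-\tau)\Omega_i(\tau) > 0$, so adoption transiently overshoots $\tau$, the box is \emph{not} invariant, and the weaker hypothesis $\delta_i > \Omega_i(\tau)$ no longer makes the $x_k$-case bound nonpositive once some $x_k$ climbs past $\rho_k$. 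I would control this by showing the overshoot remains below the thresholds $\rho_i$ and by invoking the localization fact that every positive equilibrium has $\max_i x_i^\ast > \tau$ (from the same largest-component computation): as the maximal trajectory $\bar z(t)$ of the cooperative system converges to an equilibrium, that limit is then forced to be $0_{2N}$, and the squeeze yields $z(t)\to 0_{2N}$. Reconciling the transient overshoot with the weaker subcriticality $\delta_i > \Omega_i(\tau)$ is the crux and the step I expect to require the most care.
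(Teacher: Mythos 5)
Your $\tau=1$ argument is correct and takes a genuinely different route from the paper. The paper bounds the nonlinear flow by the linear comparison system $\dot z\le Pz$ with $P$ as in \eqref{eq:Pmat}, invokes Lemma \ref{lem:posdiag} to get a positive diagonal $Q$ with $P^{T}Q+QP\preceq 0$, and uses $V=z^{T}Qz$, handling the degenerate direction $\mathrm{span}(1_{2N})$ by hand. You instead exploit cooperativity directly with $V(z)=\max_i\{x_i,o_i\}$, which is more elementary (no irreducibility or Lemma \ref{lem:posdiag}) and localizes the equilibria cleanly; the price is a nonsmooth LaSalle argument on the non-compact set $[0,1]^{2N}\setminus\{1_{2N}\}$, where you still must rule out non-equilibrium invariant subsets of $\{D^{+}V=0\}$ by propagating $o_k=\gamma_k x_k$ and $\dot x_k<0$ along the strongly connected opinion graph --- the same kind of bookkeeping the paper does on $\mathrm{span}(1_{2N})$. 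Those steps are routine given your sketch.

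The genuine gap is the $\tau<1$ case, and you have located it precisely: $[0,\tau]^{N}\times[0,1]^{N}$ is not forward invariant, and $\delta_i>\Omega_i(\tau)$ does not control the overshoot. Your proposed repair cannot close it. The corner $x(0)=\tau 1_N$, $o(0)=1_N$ is not a super-equilibrium ($\dot x_i=(1-\tau)\Omega_i(\tau)>0$ there), so the maximal trajectory is not monotone and individual trajectories of a cooperative system need not converge to equilibria without further argument. Worse, the escape is real rather than merely unprovable by your method: take two nodes with $\gamma_i=1$, large cross-weights $\beta_{ij}$, small $\tau$, $\delta_i$ just above $\Omega_i(\tau)$ but far below $\Omega_i(1)$, and tiny $w_i^x$. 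From $x(0)=\tau 1_N$, $o(0)=1_N$ the adoption states are driven up while the opinions barely move, so the trajectory enters $[\tau',1]^{N}\times[0,1]^{N}$ for some $\tau'$ with $\Omega_i(\tau')>\delta_i$, whence the paper's own Theorem \ref{th:as1} forces convergence to $1_{2N}$, not $0_{2N}$. So no argument can establish the $\tau<1$ claim on the stated region; the paper's own proof inherits the identical defect, since Lemma \ref{lem:Pmat} only asserts $\dot z\le Pz$ \emph{while} the state remains in the box and invariance is never established. The statement survives either for $\tau=1$, or for $\tau<1$ on a genuinely invariant region such as $[0,\tau]^{2N}$ (where the bound $o_i\le\tau$ makes $\dot x_i<0$ on the face $x_i=\tau$).
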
 

 The parameter values that ensure global convergence, $\delta_{i}>\Omega_{i}(1) \ \forall i$, can be interpreted as communities whose rate of disadoption is greater than any potential adoption from network neighbors. 
 For example this can occur in a community whose identity does not allow for the use of a given product, such as a vegan community that will not consume a new meat product, even if all their network neighbors have adopted the product. }

\begin{remark}
The sufficient condition for stability of $0_{2N}$ { on $[0,1]^{2N}\setminus\{1_{2N}\}$} in Theorem \ref{thm:0global} is equivalent to 
\begin{equation*}
\Omega_i{(1)} - \delta_{i} < 0 \ \forall i.
\end{equation*}
which by 
{
the Gershgorin Disc Theorem
}
implies that \begin{equation}\label{eq:s}
\alpha(B-D) < 0, 
\end{equation} 
where $B$ is the matrix of $\beta_{ij}$'s and $D = \text{diag}\left(  \delta_{i} \right)$. 

This is the well-known necessary and sufficient condition for asymptotic stability of the healthy state $0_N$ for the general networked SIS epidemic model \cite{liu2016onthe,bivirus_tac}.  Note that the condition in Theorem~\ref{thm:0global} causes all the Gershgorin discs to be strictly in the left half plane, a sufficient condition for \eqref{eq:s} to hold. Hence, the condition for Theorem \ref{thm:0global} is more stringent than \eqref{eq:s}.

\end{remark}

\renewcommand\arraystretch{1.25}
{\subsection{Behavior of Exceptional Products}
This section considers the behavior of the system when an exceptional product, $\gamma_{i}=1,~\forall i$, is spreading through the network. First, the hit equilibrium $z^{\ast}=1_{2N}$ is characterized and then a class of unstable equilibria is studied.   }
    We consider the behavior of the hit equilibrium $z^{\ast}=1_{2N}$, { which is an equilbrium point if  $\gamma_{i}=1,~\forall i$}. 

   {     \begin{theorem}\label{th:as1}
    If $\Omega_{i}(\tau)>\delta_{i},~{ ~w_{i}^{x}>0, \gamma_{i}=1} \ \forall i$, then $1_{2N}$ is asymptotically stable on $[\tau,1]^{N}\times[0,1]^{N}$ for $\tau>0$ and $[0,1]^{2N}\setminus\{0_{2N}\}$ for $\tau=0$.
    \end{theorem}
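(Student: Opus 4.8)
The plan is to pass to the error coordinates $\tilde x = 1_N - x$ and $\tilde o = 1_N - o$, so that the hit state $1_{2N}$ becomes the origin, and to exploit that the combined system is \emph{cooperative} on $[0,1]^{2N}$. Indeed, the off-diagonal partial derivatives computed in \eqref{eq:dfdxj}, \eqref{eq:dfdoi}, \eqref{eq:dg'dxi} and \eqref{eq:dg'doj} are all nonnegative on $[0,1]^{2N}$, so the Jacobian of \eqref{eq:fullmodel} is Metzler throughout the cube and the flow $\phi_t$ is order preserving: $z\le z'$ (componentwise) implies $\phi_t(z)\le\phi_t(z')$. Since $1_{2N}$ is an equilibrium when $\gamma_i=1$ and $p=(\tau 1_N,0_N)$ is the infimum of the box $[\tau,1]^N\times[0,1]^N$, order preservation squeezes every trajectory from the box between $\phi_t(p)$ and $\phi_t(1_{2N})=1_{2N}$. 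Hence it suffices to prove $\phi_t(p)\to 1_{2N}$.

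The core estimate is a LaSalle argument on $V(z)=\max_i\max(1-x_i,1-o_i)=\max_k\tilde z_k$. I would bound its upper Dini derivative on the sub-box $\{V\le 1-\tau\}$ by splitting on whether the maximum is attained at an opinion or an adoption coordinate. If it is an opinion coordinate $\tilde o_i$, then each diffusion term and the term $w_i^x(\tilde x_i-\tilde o_i)$ is nonpositive because every other $\tilde o_j$ and $\tilde x_i$ is $\le\tilde o_i$, so $\dot{\tilde o}_i\le 0$ unconditionally. If it is an adoption coordinate, using $\tilde o_i\le\tilde x_i$ one obtains $\dot{\tilde x}_i\le \tilde x_i(1-\tilde x_i)(\delta_i-\hat\beta_i)$ with $\hat\beta_i=\sum_{\mathcal{N}_i^A}\beta_{ij}x_j+\beta_{ii}$; since $x_j\ge 1-V\ge\tau$ on this set, $\hat\beta_i\ge\Omega_i(\tau)>\delta_i$, making the bound nonpositive. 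Thus $\{V\le c\}$ is forward invariant for $c\le 1-\tau$ and $D^+V\le 0$ there. Strong connectivity (Assumption~\ref{ass:og}) together with $w_i^x>0$ then pins the largest invariant subset of $\{D^+V=0\}$ to $\{1_{2N}\}$: a coordinate frozen at the maximum propagates the value $V$ through the opinion graph and to the matched adoption coordinate, which strictly decreases whenever $0<V<1$. This yields asymptotic stability of $1_{2N}$ on $[\tau,1]^N\times[\tau,1]^N$, and in particular local asymptotic stability of $1_{2N}$.

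To identify the limit of $\phi_t(p)$ I would first show that $1_{2N}$ is the \emph{unique} equilibrium with $x^*\ge\tau 1_N$. At any equilibrium the opinion equation \eqref{eq:ovec} gives $o^*=(\mathcal{L}_O+W)^{-1}W x^*=:Px^*$, and because $(\mathcal{L}_O+W)1_N=W1_N$ the matrix $P$ is row stochastic, while strong connectivity and $w_i^x>0$ make it entrywise positive. Evaluating the adoption balance $\delta_i x_i^*(1-o_i^*)=(1-x_i^*)o_i^*\hat\beta_i$ at an index $i_1$ minimizing $x_i^*$ gives $o_{i_1}^*\ge x_{i_1}^*$ (a stochastic average is at least the minimum); but since $\hat\beta_{i_1}\ge\Omega_{i_1}(\tau)>\delta_{i_1}$ the same balance forces $o_{i_1}^*<x_{i_1}^*$ unless $x_{i_1}^*=1$, whence every $x_i^*=1$ and $z^*=1_{2N}$.

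The remaining, and hardest, step is to show that the trajectory from the low-opinion corner $p$ actually enters the basin of the second paragraph, since the box is \emph{not} forward invariant and this corner trajectory is non-monotone ($x$ dips before recovering). Here I would use that $o(t)\ge 0=o(0)$ for all $t$ and that $\dot x_i\ge-\delta_i x_i$ gives $x_i(t)\ge\tau e^{-\delta_i t}>0$, so the trajectory is interior for $t>0$; the forcing $w_i^x x_i$ drives the opinions up, and once they are positive the threshold $\Omega_i(\tau)>\delta_i$ renders adoption self-sustaining. The clean way to convert this into convergence is a monotone-systems argument: $\phi_t(p)$ is bounded, so its $\omega$-limit set is compact and invariant, and its supremum is an equilibrium dominating $p$; by the uniqueness result that equilibrium is $1_{2N}$, and since $1_{2N}$ is stable, $\omega(p)=\{1_{2N}\}$. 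Combined with the squeeze, this gives convergence from the whole box. Finally, the $\tau=0$ case (condition $\beta_{ii}>\delta_i$) reduces to the above by showing $0_{2N}$ is repelling, so that every $z\ne 0_{2N}$ enters $[\tau',1]^N\times[\tau',1]^N$ for some $\tau'>0$. I expect the genuine difficulty to be concentrated in this adoption-recovery/entering-the-basin step rather than in the Lyapunov estimate itself.
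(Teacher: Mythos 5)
Your route is genuinely different from the paper's. The paper passes to $\hat{x}=1_N-x$, $\hat{o}=1_N-o$, bounds the transformed dynamics by the linear comparison system $\dot{\hat{z}}\leq \hat{P}\hat{z}$ with $\hat{P}$ the irreducible Metzler matrix in \eqref{eq:Phatmat}, and invokes Lemma \ref{lem:posdiag} to get a quadratic Lyapunov function $\hat{z}^{T}\hat{Q}\hat{z}$, finishing ``as in Theorem \ref{thm:0global}.'' You instead use cooperativity of the full $2N$-dimensional system, a max-type Lyapunov function, and an equilibrium-uniqueness argument via the row-stochastic matrix $(\mathcal{L}_{O}+W)^{-1}W$. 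The cooperativity observation, the squeeze between $\phi_t(p)$ and $1_{2N}$, the Dini-derivative estimate on $[\tau,1]^{2N}$, and the uniqueness of the equilibrium with $x^{*}\geq\tau 1_N$ (for $\tau>0$) are all correct, and in one respect you are more careful than the paper: you notice explicitly that $[\tau,1]^{N}\times[0,1]^{N}$ is not forward invariant, whereas the paper's proof silently applies the bound $\sum_{j}\beta_{ij}x_j+\beta_{ii}\geq\Omega_i(\tau)$ along the entire trajectory without justifying that $x(t)\geq\tau 1_N$ persists.

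The gap is exactly where you predicted it, and it is not repairable. The step ``the supremum of $\omega(p)$ is an equilibrium dominating $p$'' is not a theorem of monotone-systems theory: since $\dot{x}_i=-\delta_i\tau<0$ at $p=(\tau 1_N,0_N)$, $p$ is not a sub-equilibrium, the orbit from $p$ is not eventually monotone, and nothing forces $\omega(p)$ to contain an equilibrium above $p$. In fact the statement fails on the claimed domain for $\tau>0$. Take $N=2$ with complete graphs, $\beta_{12}=\beta_{21}=10$, $\beta_{ii}=0.01$, $\delta_i=5$, $w_i^{x}=10^{-3}$, $w_{ij}^{o}=1$, $\gamma_i=1$, $\tau=0.9$, so that $\Omega_i(\tau)=9.01>\delta_i$. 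From $x(0)=(0.9,0.9)$, $o(0)=(0,0)\in[\tau,1]^{2}\times[0,1]^{2}$, symmetry gives $\dot{x}=-5x(1-o)+(1-x)o(10x+0.01)$ and $\dot{o}=10^{-3}(x-o)$; one checks that $x$ collapses at rate roughly $5$ while $o$ rises only at rate $10^{-3}$, so by $t=3$ the state lies in the forward-invariant region $\{x\leq 10^{-4},\,o\leq 10^{-2}\}$ on which a two-dimensional Metzler comparison shows convergence to the origin (consistent with Lemma \ref{lem:loc0}, since $\delta_i>\beta_{ii}$ makes $0_{2N}$ locally asymptotically stable here). So the trajectory converges to $0_{4}$, not $1_{4}$. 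What your argument actually proves --- asymptotic stability on the smaller box $[\tau,1]^{2N}$ --- is the correct conclusion for $\tau>0$; the $\tau=0$ case survives intact because the comparison $\sum_j\beta_{ij}x_j+\beta_{ii}\geq\Omega_i(0)=\beta_{ii}>\delta_i$ holds on all of $[0,1]^{2N}$, so no basin-entry step is needed there.
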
}
    { When $\tau=0$, the condition in Theorem \ref{th:as1} becomes $\beta_{ii}>\delta_{i}$ which captures that a sufficient level of innovation is required to ensure that a product takes off with no prior adoption.} 

{ Even in the case of an exceptional product, $\gamma_{i}=1,~\forall i$, ~} if the stability conditions presented previously for the global stability of $z^{\ast}=1_{2N}$ or $z^{\ast}=0_{2N}$ are not satisfied, there is the possibility that a third equilibrium point exists for the system. It is also possible that $z^{\ast}=1_{2N}$ or $z^{\ast}=0_{2N}$ are unstable. One class of these equilibria is studied and is shown to be unstable. 
\begin{lemma}\label{lem:eqast}
If { $\gamma_{i}=1~\forall i$ and} there exists a $z^{\ast}$ such that for all $i$ it holds that $\delta_{i}=\sum_{\mcal{N}_i^A} \beta_{ij} x^{\ast}_{j} +\beta_{ii}$, $x^{\ast}_{i}=o^{\ast}_{i}$ and $\sum_{j\in\mathcal{N}_{o}^{i}} x_{j}-x_{i}=0$ then $z^{\ast}$ is an equilibrium point.
\end{lemma}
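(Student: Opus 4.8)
The plan is to verify the claim by direct substitution: an equilibrium is simply a point $z^{\ast}$ at which every component of the combined vector field \eqref{eq:fullmodel} vanishes, so it suffices to show $f_i(x^{\ast},o^{\ast})=0$ and $g_i(x^{\ast},o^{\ast})=0$ for each $i$. The three hypotheses are tailored so that each annihilates a specific piece of the dynamics, and the proof amounts to checking this term by term.

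First I would dispatch the adoption coordinate. Imposing $o_i^{\ast}=x_i^{\ast}$ in \eqref{eq:prod} lets me pull out a common factor and write $f_i(x^{\ast},o^{\ast}) = x_i^{\ast}(1-x_i^{\ast})\bigl(\,\sum_{\mcal{N}_i^A}\beta_{ij}x_j^{\ast}+\beta_{ii}-\delta_i\,\bigr)$. The parenthesized factor is exactly $\Omega_i(\cdot)-\delta_i$ evaluated at $x^{\ast}$ (cf. \eqref{eq:omega}), which the first hypothesis $\delta_i=\sum_{\mcal{N}_i^A}\beta_{ij}x_j^{\ast}+\beta_{ii}$ forces to zero. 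Hence $\dot{x}_i=0$ at $z^{\ast}$.

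Next I would handle the opinion coordinate. With $\gamma_i=1$ and $o_i^{\ast}=x_i^{\ast}$, the self-coupling term $w_i^{x}(\gamma_i x_i^{\ast}-o_i^{\ast})$ in \eqref{eq:smod2} vanishes identically, leaving only the consensus term $\sum_{j\in\mcal{N}_i^O} w_{ij}^{o}(o_j^{\ast}-o_i^{\ast})$. Substituting $o_j^{\ast}=x_j^{\ast}$ once more turns this into $\sum_{j\in\mcal{N}_i^O} w_{ij}^{o}(x_j^{\ast}-x_i^{\ast})$, which is precisely the quantity that the third hypothesis is designed to make vanish. Therefore $\dot{o}_i=0$ at $z^{\ast}$, and combining the two computations yields $\dot{z}=0$.

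The computation is entirely routine; the only care needed is in the bookkeeping. Conceptually, I must stress that it is the first hypothesis---not the benign factor $x_i^{\ast}(1-x_i^{\ast})$---that balances the adoption dynamics, since the lemma is intended to produce genuinely interior equilibria $z^{\ast}\in(0,1)^{2N}$ rather than the corner states $0_{2N}$ or $1_{2N}$ already treated in Theorems \ref{thm:0global} and \ref{th:as1}. Technically, I would double-check that the third hypothesis is correctly matched against the weighted consensus term $\sum_{j\in\mcal{N}_i^O} w_{ij}^{o}(x_j^{\ast}-x_i^{\ast})$, keeping the weights $w_{ij}^{o}$ accounted for. Finally, I would note that the argument invokes none of the connectivity or positivity assumptions on the graphs: once such a $z^{\ast}$ is assumed to exist, the equilibrium property is a pure algebraic identity. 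The genuinely hard and separate questions---whether such an interior equilibrium exists and whether it is unstable, as the surrounding discussion claims---are deferred to the stability analysis and lie outside this lemma.
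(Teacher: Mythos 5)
Your proof is correct and follows essentially the same route as the paper's: direct substitution showing that each hypothesis annihilates the corresponding piece of the vector field. The only cosmetic difference is the order of substitution in the adoption equation (the paper applies $\delta_i=\sum_{\mcal{N}_i^A}\beta_{ij}x_j^{\ast}+\beta_{ii}$ first to reduce $\dot{x}_i$ to $\delta_i(o_i^{\ast}-x_i^{\ast})$, whereas you impose $o_i^{\ast}=x_i^{\ast}$ first and factor out $x_i^{\ast}(1-x_i^{\ast})$), and your remark about matching the unweighted third hypothesis against the weighted consensus term is a fair caveat that the paper itself glosses over by implicitly taking $w_{ij}^{o}=1$.
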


\begin{theorem}\label{th:unstab}
If the equilibrium described in Lemma \ref{lem:eqast} exists and $w_{i}^{x}>0,~\forall i$,
it is unstable. 
\end{theorem}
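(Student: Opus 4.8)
The plan is to apply Lyapunov's indirect method: I would linearize \eqref{eq:fullmodel} about $z^{*}$ and show that the Jacobian $J$ has spectral abscissa $\alpha(J)>0$, so at least one eigenvalue lies in the open right half-plane and $z^{*}$ is unstable. Before assembling $J$ it helps to pin down the equilibrium. Since $\gamma_{i}=1$ and $x_{i}^{*}=o_{i}^{*}$, the coupling term $w_{i}^{x}(\gamma_{i}x_{i}^{*}-o_{i}^{*})$ in \eqref{eq:fullmodel} vanishes, so the opinion balance reduces to $\mathcal{L}_{O}o^{*}=0$; by Assumption \ref{ass:og} this forces $o^{*}=x^{*}=c\,1_{N}$ for a single scalar $c\in(0,1)$, and the adoption balance of Lemma \ref{lem:eqast} becomes $\delta_{i}=c\sum_{j\in\mcal{N}_i^A}\beta_{ij}+\beta_{ii}$.

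Next I would assemble $J$ from the partials \eqref{eq:dfdxi}--\eqref{eq:dg'doj}, using these identities to simplify the diagonal blocks. Substituting $o_{i}^{*}=c$ and $\sum\beta_{ij}x_{j}^{*}+\beta_{ii}=\delta_{i}$ gives $\partial f_{i}/\partial x_{i}=-\delta_{i}$ and $\partial f_{i}/\partial o_{i}=\delta_{i}$, so that
\[
J=\begin{bmatrix} -D+(1-c)c\,\hat{B} & D\\ W & -(\mathcal{L}_{O}+W)\end{bmatrix},
\]
where $D=\mathrm{diag}(\delta_{i})$, $W=\mathrm{diag}(w_{i}^{x})$, and $\hat{B}$ holds the off-diagonal weights $\beta_{ij}$. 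Every off-diagonal entry of $J$ is nonnegative, since $(1-c)c\beta_{ij}$, $\delta_{i}$, $w_{i}^{x}$, and the opinion weights $w_{ij}^{o}$ are all $\ge 0$, so $J$ is Metzler. Moreover $\delta_{i}>0$ and (by hypothesis) $w_{i}^{x}>0$ couple each pair $(x_{i},o_{i})$ in both directions, while strong connectivity of $\mcal{G}_O$ couples the opinion coordinates among themselves; hence the directed graph associated with $J$ is strongly connected and $J$ is irreducible.

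The crux is a Perron--Frobenius argument with the test vector $v=[\,1_{N}^{T},\,1_{N}^{T}\,]^{T}>0$. Because $\mathcal{L}_{O}1_{N}=0$, the opinion block of $Jv$ is $W1_{N}-(\mathcal{L}_{O}+W)1_{N}=0$, while the adoption block has $i$th entry $(1-c)c\sum_{j\in\mcal{N}_i^A}\beta_{ij}=(1-c)(\delta_{i}-\beta_{ii})\ge 0$, which is nonzero as soon as the adoption network carries a positive weight (an interior equilibrium forces $\delta_{i}>\beta_{ii}$ for at least one $i$). Thus $Jv\ge 0$ with $Jv\neq 0$. Letting $u>0$ be the strictly positive left Perron eigenvector with $u^{T}J=\alpha(J)u^{T}$, I obtain $\alpha(J)=u^{T}(Jv)/(u^{T}v)$; since $u>0$, $v>0$, and $Jv$ is nonnegative and nonzero, both $u^{T}(Jv)>0$ and $u^{T}v>0$, whence $\alpha(J)>0$.

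The main obstacle is securing the strict inequality $\alpha(J)>0$ rather than merely $\alpha(J)\ge 0$. The naive Collatz--Wielandt bound $\alpha(J)\ge\min_{i}(Jv)_{i}/v_{i}$ is useless here, because the opinion block of $Jv$ is exactly zero and so the minimum is $0$; upgrading to strict positivity is precisely what the left-eigenvector computation achieves, and it is where irreducibility---hence the assumptions $w_{i}^{x}>0\ \forall i$ and strong connectivity of $\mcal{G}_O$---is indispensable. Once $\alpha(J)>0$ is established, instability of $z^{*}$ follows from the linearization theorem; the Perron--Frobenius facts for irreducible Metzler matrices used above are standard \cite{berman1994nonnegative}.
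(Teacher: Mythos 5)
Your proposal is correct and follows the same overall strategy as the paper: form the Jacobian at $z^{*}$, observe it is an irreducible Metzler matrix (irreducibility coming from $\delta_i>0$, $w_i^x>0$, and strong connectivity of $\mcal{G}_O$), and extract $\alpha(J)>0$ by a Perron--Frobenius test-vector argument. The one place you genuinely diverge is the closing step. The paper perturbs the test vector, taking $y=[(1+\epsilon_1),\dots,(1+\epsilon_N),1_N^T]^T$ with each $\epsilon_i$ small enough that $J(z^*)y>0$ holds entrywise, and then applies its Lemma~\ref{lem:metzeig} directly. You instead keep $v=1_{2N}$, accept that $Jv\ge 0$ with the opinion block exactly zero, and upgrade to $\alpha(J)>0$ by pairing with the strictly positive left Perron eigenvector; this is a valid and arguably cleaner way to get strictness, and you correctly identify why the naive Collatz--Wielandt lower bound is insufficient. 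Your specialization $x^{*}=o^{*}=c\,1_N$ is legitimate under Assumption~\ref{ass:og} (the Laplacian of a strongly connected digraph has kernel $\mathrm{span}(1_N)$), whereas the paper's computation does not need it; neither version is more general in substance. One caveat applies equally to both arguments: they require the off-diagonal adoption coupling $(1-x_i^{*})x_i^{*}\beta_{ij}$ to be nonzero somewhere (for the paper, nonzero for every $i$, since otherwise no admissible $\epsilon_i$ exists), so strictly speaking neither proof covers the boundary cases $z^{*}\in\{0_{2N},1_{2N}\}$ that the corollaries invoke; you at least flag the nondegeneracy you need, so this is not a defect of your proposal relative to the paper's.
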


Since the hit and flop equilibria ($1_{2N}$, $0_{2N}$, respectively), { can} satisfy the conditions of Lemma \ref{lem:eqast}, { together with the local stability result for $1_{2N}$ presented in the Appendix} we obtain the following.
\begin{corollary}
Suppose the opinion graph is strongly connected and { $\gamma_{i}= 1,~w_{i}^{x}>0, \forall i$}. If $\delta_{i}=\beta_{ii},~\forall i$, then $0_{2N}$ is unstable; if $\delta_{i}>\beta_{ii},~\forall i$, then
$0_{2N}$ is locally stable; and if $\delta_{i}>\Omega_i{(1)},~\forall i$, then
$0_{2N}$ is asymptotically stable.
\end{corollary}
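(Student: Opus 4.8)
The plan is to read this corollary as a consolidation of the three preceding stability results, each specialized to the flop equilibrium $0_{2N}$ under the standing hypotheses $\gamma_i=1$ and $w_i^x>0$ for all $i$. Since the origin is always an equilibrium of the coupled dynamics, I would simply handle the three parameter regimes one at a time, in each case checking that $0_{2N}$ satisfies the hypotheses of exactly one earlier statement and then invoking it.

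For the instability regime $\delta_i=\beta_{ii}$, the key step is to verify that $0_{2N}$ meets every condition of Lemma \ref{lem:eqast}. I would check this componentwise: with $x^{\ast}=0$, the adoption-rate condition $\delta_i=\sum_{\mathcal{N}_i^A}\beta_{ij}x_j^{\ast}+\beta_{ii}$ collapses to $\delta_i=\beta_{ii}$, which is precisely the assumed regime; the condition $x_i^{\ast}=o_i^{\ast}$ holds trivially since both sides vanish; and the consensus-type condition $\sum_{j\in\mathcal{N}_o^i}x_j-x_i=0$ is immediate at the origin. Because $\gamma_i=1$ for all $i$ is assumed, the hypotheses of Lemma \ref{lem:eqast} are met, so $0_{2N}$ is an equilibrium of the claimed form. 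Theorem \ref{th:unstab}, whose only additional requirement is $w_i^x>0$ for all $i$ (also assumed), then yields instability of $0_{2N}$.

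The remaining two regimes follow by direct citation. When $\delta_i>\beta_{ii}$ for all $i$, the first sufficient condition of Lemma \ref{lem:loc0} applies verbatim (and with $\gamma=\max_i\gamma_i=1$ its second condition coincides with the first), giving local stability. When $\delta_i>\Omega_i(1)$ for all $i$, Theorem \ref{thm:0global} with $\tau=1$ applies, since $w_i^x>0$ for all $i$ is in force; this gives asymptotic stability of $0_{2N}$ on $[0,1]^{2N}\setminus\{1_{2N}\}$, which in particular entails local asymptotic stability as stated.

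I expect essentially no analytic obstacle, since all the heavy lifting lives in the cited Lyapunov and linearization arguments. The one point demanding care is the verification in the first regime that $0_{2N}$ genuinely matches the structure of Lemma \ref{lem:eqast} — in particular confirming that the borderline equality $\delta_i=\beta_{ii}$, rather than a strict inequality, is exactly what places the flop point inside the unstable class. This ensures the transition between instability at $\delta_i=\beta_{ii}$ and local stability at $\delta_i>\beta_{ii}$ is accounted for with no gap, and that the three regimes cover the relevant boundary cleanly.
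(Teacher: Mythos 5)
Your proposal is correct and matches the paper's own (implicit) argument exactly: the paper presents this corollary as an immediate consequence of checking that $0_{2N}$ satisfies the conditions of Lemma \ref{lem:eqast} when $\delta_i=\beta_{ii}$ (hence unstable by Theorem \ref{th:unstab}), combined with Lemma \ref{lem:loc0} for local stability and Theorem \ref{thm:0global} with $\tau=1$ for asymptotic stability. Your componentwise verification of the Lemma \ref{lem:eqast} hypotheses at the origin is precisely the only nontrivial step, and you have carried it out correctly.
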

\begin{corollary}
Suppose the opinion graph is strongly connected and { $\gamma_{i}= 1,~w_{i}^{x}>0, \forall i$}. If $\delta_{i}=\Omega_i{(1)},~\forall i$, then $1_{2N}$ is unstable; if $\Omega_i{(1)}>\delta_{i},~\forall i$, then
$1_{2N}$ is locally stable; and if $\beta_{ii}>\delta_{i},~\forall i$, then
$1_{2N}$ is asymptotically stable.
\end{corollary}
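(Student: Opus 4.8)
The plan is to prove the three implications separately, reducing each to a result already established for the ``hit'' equilibrium by specializing $\tau$ in Theorem~\ref{th:as1} or by recognizing $1_{2N}$ as an equilibrium of the type in Lemma~\ref{lem:eqast}; I use throughout that $\Omega_i(\tau)=\bigl(\sum_{\mcal{N}_i^A}\beta_{ij}\bigr)\tau+\beta_{ii}$ is affine and nondecreasing in $\tau$, with $\Omega_i(0)=\beta_{ii}$. The asymptotic-stability case is then immediate: the hypothesis $\beta_{ii}>\delta_i$ for all $i$ is exactly $\Omega_i(0)>\delta_i$, so with $\gamma_i=1$, $w_i^x>0$ and $\mcal{G}_O$ strongly connected, Theorem~\ref{th:as1} at $\tau=0$ gives that $1_{2N}$ is asymptotically stable on $[0,1]^{2N}\setminus\{0_{2N}\}$.

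For the local-stability case I would use continuity of $\Omega_i$ in $\tau$. Since $\Omega_i(1)>\delta_i$ holds strictly for each $i$ and each $\Omega_i$ is affine, there is a $\tau\in(0,1)$ with $\Omega_i(\tau)>\delta_i$ for all $i$, and Theorem~\ref{th:as1} then yields asymptotic stability on $[\tau,1]^N\times[0,1]^N$. As $1_{2N}$ is the corner of $[0,1]^{2N}$ with $x_i=o_i=1$, every state within distance $1-\tau$ of it has $x_i\ge\tau$, so $[\tau,1]^N\times[0,1]^N$ contains a relative neighborhood of $1_{2N}$ and local stability follows. Equivalently one may invoke the Jacobian-based result of the Appendix: at $1_{2N}$ with $\gamma_i=1$ the Jacobian of \eqref{eq:fullmodel} is the irreducible Metzler matrix $\left[\begin{smallmatrix}-\mathrm{diag}(\Omega_i(1)) & \mathrm{diag}(\delta_i)\\ W & -(\mathcal{L}_O+W)\end{smallmatrix}\right]$, whose adoption rows sum to $\delta_i-\Omega_i(1)<0$ and whose opinion rows sum to $0$; pairing $J1_{2N}\le 0$ (with strictly negative first $N$ entries) against a positive left eigenvector of $J$ for $\alpha(J)$ forces $\alpha(J)<0$.

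For the instability case I would verify that $z^*=1_{2N}$ meets all three hypotheses of Lemma~\ref{lem:eqast} when $\gamma_i=1$ and $\delta_i=\Omega_i(1)$. With $x_j^*=1$, the condition $\delta_i=\sum_{\mcal{N}_i^A}\beta_{ij}x_j^*+\beta_{ii}$ becomes $\delta_i=\Omega_i(1)$, the condition $x_i^*=o_i^*$ reads $1=1$, and the consensus difference $\sum_{j\in\mcal{N}_i^O}w_{ij}^o(x_j^*-x_i^*)$ vanishes because all adoption values coincide at the corner. Thus $1_{2N}$ is an equilibrium of exactly the form treated in Lemma~\ref{lem:eqast}, and since $w_i^x>0$ for all $i$, Theorem~\ref{th:unstab} gives that it is unstable.

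The main obstacle is conceptual rather than computational: the three regimes are separated by the threshold $\delta_i=\Omega_i(1)$, at which neither stability theorem applies, so instability there must be extracted from the degenerate equilibrium structure via Lemma~\ref{lem:eqast} and Theorem~\ref{th:unstab} rather than from a linearization. The only point needing care in the local case is that $1_{2N}$ is a corner equilibrium, so ``locally stable'' must be understood relative to $[0,1]^{2N}$; the choice of $\tau<1$ handles this cleanly, whereas the Jacobian route must use irreducibility — hence strong connectivity of $\mcal{G}_O$ together with $w_i^x>0$ — to upgrade the marginal bound $\alpha(J)\le 0$ into the strict $\alpha(J)<0$.
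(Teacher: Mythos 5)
Your proposal is correct and assembles the corollary from exactly the same ingredients the paper intends: Theorem~\ref{th:as1} at $\tau=0$ for asymptotic stability, the Appendix's Jacobian lemma for local stability when $\Omega_i(1)>\delta_i$, and the verification that $1_{2N}$ satisfies Lemma~\ref{lem:eqast} so that Theorem~\ref{th:unstab} yields instability at the threshold $\delta_i=\Omega_i(1)$. The extra routes you sketch (choosing $\tau<1$ to get a relative neighborhood, and the left-eigenvector argument for $\alpha(J)<0$) are valid but optional embellishments on the same structure.
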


It is possible that the equilibrium in Lemma \ref{lem:eqast} exists in $(0,1)^{2N}$. If such an equilibrium exists and the opinion graph is strongly connected then this equilibrium is unstable, causing the system behavior to be dependent on the initial condition as is explored further in the Simulation Section. A summary of the stability results of this section is given in Table \ref{table:stable_table}.

\begin{table}
\centering
\resizebox{\columnwidth}{!}{
\begin{tabular}{c|c|c|c}
$z^{\star}$
&Unstable&Local Stable & Asymptotic Stable \\ \hline  
$0_{2N}$&$\delta_{i}=\beta_{ii}$&$\delta_{i}>\beta_{ii}$&$ \delta_{i}>\Omega_i{(1)} $\\
$1_{2N}$&$\Omega_i{(1)} = \delta_{i}$&$\Omega_i{(1)} > \delta_{i}$&$\beta_{ii}>\delta_{i}$\\
\end{tabular}}
\caption{Summary of stability conditions: recall from \eqref{eq:omega} $\Omega_i{(\tau)} \equiv \sum_{\mcal{N}_i^A} \beta_{ij}{\tau} +\beta_{ii}$.
}
\label{table:stable_table}
\end{table}

\section{Bounded Confidence Model}
In this section, we extend our consideration to the bounded confidence opinion dynamic model. The bounded confidence model is an extension of the Abelson opinion dynamic model \cite{canuto2008eulerian,blondel2010continuous},
which when coupled with the adoption dynamic is as follows:
\begin{equation}
\dot{o}_{i}=g_{i}(x,o)=\sum_{j\in\mcal{N}_i^O} p(o_{j},o_{i})(o_j-o_i)+w_i^x({\gamma_{i}}x_{i}-o_{i}) \label{eq:ob},
\end{equation}
\noindent where
\begin{equation}
p(o_{j},o_{i}) = \begin{cases}
            w_{ij}^o &\mbox{if} \  \|o_{j}-o_{i}\|<\xi \\
            0 &\mbox{if} \ \text{else}.	
        \end{cases}
\end{equation}
Under the bounded confidence model, nodes will sever a link in the opinion graph if the nodes have sufficiently different opinions and maintain or reintroduce the link if the respective opinions are closer than $\xi$. This behavior is essentially a state dependent switch between opinion graph topologies. These opinion graphs may not be connected, to the point where each node may have no neighbors in the opinion graph. However the structure of the coupling with the adoption dynamic ensures that the conditions for the asymptotic equilibria $z^* \in\{0_{2N},1_{2N}\}$ are the same. 

As { is detailed in the Appendix}, we consider arbitrary switching behavior in the bounded confidence model and as such the domain over which stability is proven must be modified from $[0,1]^{2N}\setminus \{1_{2N}\}$ to $[0,1)^{2N}$. It is likely that the specific interaction pattern of the bounded confidence model will allow stronger results to be shown under additional constraints, however such a consideration is left for future work.

\begin{theorem}\label{th:as0bc}
If $\delta_{i}>\Omega_i{(1)} \ \forall i$, the opinion graph is undirected, and $\Omega_i{(1)}=w_{i}^{x}, \ \forall i$ then $0_{2N}$ is uniformly asymptotically stable on $[0,1)^{2N}$ under the bounded confidence opinion dynamic.
\end{theorem}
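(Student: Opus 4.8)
The plan is to produce a \emph{common} (weak) Lyapunov function whose derivative is independent of the active opinion topology, so that the state-dependent switching induced by the bounded-confidence threshold $\xi$ plays no role in the stability certificate. The candidate is the linear function
\begin{equation}
V(z)=\sum_{i=1}^{N}\left(x_i+o_i\right),
\end{equation}
which is nonnegative on the domain and vanishes only at $0_{2N}$. The reason undirectedness is assumed is that it forces the consensus terms to cancel in the aggregate: for \emph{any} active subgraph, $p(o_j,o_i)=p(o_i,o_j)$, so $\sum_i\sum_j p(o_j,o_i)(o_j-o_i)=0$ no matter which pairs currently satisfy $\|o_j-o_i\|<\xi$. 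Consequently $\sum_i\dot o_i=\sum_i w_i^x(\gamma_i x_i-o_i)$, and hence $\dot V$, are the same function of $z$ for every mode, reducing the switched problem to a single time-invariant computation.

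First I would compute $\dot V$. Writing $\Omega_i:=\Omega_i(1)$ and $S_i:=\sum_{\mcal{N}_i^A}\beta_{ij}x_j+\beta_{ii}$ (so $0<\beta_{ii}\le S_i\le\Omega_i$ by Assumption \ref{ass:beta}) and inserting the hypothesis $w_i^x=\Omega_i$, a direct expansion yields $\dot V=\sum_{i=1}^{N}h_i$ with
\begin{equation}
h_i=-(\delta_i-\Omega_i\gamma_i)\,x_i-(\Omega_i-S_i)\,o_i+(\delta_i-S_i)\,x_i o_i .
\end{equation}
Under $\delta_i>\Omega_i$ and $\gamma_i\in(0,1]$ the coefficient $\delta_i-\Omega_i\gamma_i$ is positive, $\Omega_i-S_i\ge0$, and $\delta_i-S_i>0$, so $h_i$ is bilinear in $(x_i,o_i)$ with a favorable linear part but a \emph{positive} cross term.

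The main obstacle is exactly that positive cross term $(\delta_i-S_i)x_io_i$, which rules out term-by-term sign control. I would resolve it by bilinearity: with the other states (hence $S_i$) frozen, $h_i$ is affine in each variable separately, so its maximum over the closed cell $[0,1]^2$ is attained at a corner. The four corners give $h_i(0,0)=0$, $h_i(1,0)=-(\delta_i-\Omega_i\gamma_i)<0$, $h_i(0,1)=-(\Omega_i-S_i)\le0$, and, decisively, $h_i(1,1)=\Omega_i(\gamma_i-1)\le0$. The last value is where the standing hypothesis $w_i^x=\Omega_i$ is indispensable: it is the precise cancellation that pushes the worst corner down to $\Omega_i(\gamma_i-1)$, which is nonpositive for any desirable product and zero only in the exceptional case $\gamma_i=1$. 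Hence $h_i\le0$ and $\dot V\le0$ on $[0,1]^{2N}$. For strictness off the adoption-free set, fix a node with $x_i>0$ and read $h_i$ as an affine function of $o_i$: if its $o_i$-slope $(\delta_i-S_i)x_i-(\Omega_i-S_i)$ is nonpositive then $h_i\le h_i|_{o_i=0}=-(\delta_i-\Omega_i\gamma_i)x_i<0$; if the slope is positive then $h_i$ is increasing and stays strictly below its limit as $o_i\uparrow1$, namely $(\Omega_i\gamma_i-S_i)x_i-(\Omega_i-S_i)$, an affine function of $x_i$ bounded above by its endpoint values $-(\Omega_i-S_i)\le0$ and $\Omega_i(\gamma_i-1)\le0$. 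In either case $h_i<0$ whenever $x_i>0$ and $o_i<1$, which is exactly what restricting to the half-open box $[0,1)^{2N}$ secures.

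It remains to upgrade $\dot V\le0$ to uniform asymptotic stability. Since $h_i<0$ whenever $x_i>0$, the set $\{\dot V=0\}$ is contained in $\{x=0_N\}$, and I would close the argument with a LaSalle-type invariance principle for the arbitrarily switched system (equivalently, for the associated differential inclusion). The switching creates no difficulty here because the adoption subsystem $\dot x$ contains no opinion-graph edges at all: on any trajectory confined to $\{x\equiv0_N\}$ one has $0=\dot x_i=o_i\beta_{ii}$ for every $i$, and Assumption \ref{ass:beta} then forces $o=0_N$. Thus the largest invariant set inside $\{\dot V=0\}$ is $\{0_{2N}\}$. Because $V$ is time-invariant, positive definite, and has a switching-independent derivative with this invariance property, it is a common Lyapunov function, which delivers the claimed \emph{uniform} asymptotic stability on $[0,1)^{2N}$. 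A final routine point is to confirm that trajectories issuing from $[0,1)^{2N}$ remain in $[0,1]^{2N}$ (well-posedness) and are driven strictly away from the excluded corner $1_{2N}$ by the strict decrease of $V$ off $\{x=0_N\}$, so that convergence is to $0_{2N}$.
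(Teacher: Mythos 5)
Your route is genuinely different from the paper's, and the core computation is sound. The paper bounds the nonlinear dynamics by a linear comparison system $\dot z \le P_j z$ for each active opinion topology, uses the hypotheses (undirected graph, $\Omega_i(1)=w_i^x$) to make each $P_j$ symmetric and diagonally dominant, hence negative semidefinite, and then argues that the quadratic $V=\tfrac{1}{2}z^{T}z$ is a \emph{strict} common Lyapunov function on $[0,1)^{2N}$ (the domain restriction excises the kernel directions of $P_j$ that are also equilibria), so the common-Lyapunov-function theorem (Lemma \ref{th:unistab}) applies directly. You instead take the linear function $V=\sum_i(x_i+o_i)$, exploit undirectedness to cancel the consensus terms in aggregate so that $\dot V$ is literally the same function of $z$ in every mode, and control the sign of each $h_i$ by a corner argument on the separately-affine form; the hypothesis $w_i^x=\Omega_i(1)$ enters as the cancellation that pushes the worst corner down to $\Omega_i(\gamma_i-1)\le0$. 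That portion is correct and arguably more transparent than the paper's matrix argument about where each hypothesis is used.

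The gap is in the closing step. Your $V$ is only a \emph{weak} common Lyapunov function: $\dot V=\sum_i h_i$ vanishes at nonzero states with $x=0_N$ and $o_i>0$ for any node $i$ without in-neighbors in the adoption graph (there $h_i=-(\Omega_i(1)-\beta_{ii})\,o_i=0$), and the adoption graph is not assumed connected. You therefore need an invariance principle, but LaSalle does not hold for switched systems under \emph{arbitrary} switching, which is the setting the paper adopts for the bounded-confidence dynamic and which the word ``uniformly'' in the statement refers to; this is exactly why Definition \ref{def:clf} demands strict negativity and why the paper's proof labors to obtain it on the restricted domain rather than invoking an invariance argument. The hole is fillable: since $\dot V$ is mode-independent, a Barbalat-type argument gives $x(t)\to 0_N$ for every switching signal, and then $o(t)\to 0_N$ follows from the $o$-subsystem because the matrices $-(\mathcal{L}_{\sigma}+W)$ are uniformly Hurwitz (indeed $W=\mathrm{diag}(\Omega_i(1))$ is positive definite and each $\mathcal{L}_{\sigma}$ is positive semidefinite for undirected graphs). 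But as written, the appeal to ``a LaSalle-type invariance principle for the arbitrarily switched system'' is not an off-the-shelf step, and it is precisely the step the paper's quadratic construction is engineered to avoid.
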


\begin{theorem}\label{th:as1bc}
    If $\beta_{ii}>\delta_{i},~ { \gamma_{i}=1} \ \forall i$, the opinion graph is undirected, and if $\delta_{i}=w_{i}^{x}, \ \forall i$, then $1_{2N}$ is asymptotically stable on $(0,1]^{2N}$ under the bounded confidence opinion dynamic.
    \end{theorem}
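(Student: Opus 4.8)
The plan is to move the hit equilibrium to the origin by the reflection $\tilde z \equiv \1_{2N}-z$, i.e. $\tilde x_i = 1-x_i$ and $\tilde o_i = 1-o_i$, which carries the domain $(0,1]^{2N}$ onto $[0,1)^{2N}$ and the equilibrium $\1_{2N}$ onto $\0_{2N}$. Since $\gamma_i=1$ for all $i$, substituting into \eqref{eq:ob} shows the opinion dynamic is form-invariant, $\dot{\tilde o}_i = \sum_{j} p(\tilde o_j,\tilde o_i)(\tilde o_j-\tilde o_i)+w_i^x(\tilde x_i-\tilde o_i)$, where I have used that the bounded-confidence coupling depends only on $\|o_i-o_j\|=\|\tilde o_i-\tilde o_j\|$. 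The adoption equation \eqref{eq:prod} becomes the dual form $\dot{\tilde x}_i = \delta_i(1-\tilde x_i)\tilde o_i-\tilde x_i(1-\tilde o_i)\bigl(\Omega_i(1)-\sum_j\beta_{ij}\tilde x_j\bigr)$. This reduces Theorem \ref{th:as1bc} to a flop-type statement on $[0,1)^{2N}$, paralleling Theorem \ref{th:as0bc}; because the transformed adoption dynamic is not literally \eqref{eq:prod}, I would not invoke Theorem \ref{th:as0bc} as a black box but instead exhibit a common Lyapunov function directly.

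I would take the common quadratic candidate $V(\tilde z)=\tfrac12\|\tilde z\|^2=\tfrac12\sum_i(\tilde x_i^2+\tilde o_i^2)$, positive definite with minimum at $\0_{2N}$, and compute $\dot V=\sum_i\tilde x_i\dot{\tilde x}_i+\sum_i\tilde o_i\dot{\tilde o}_i$ along the transformed flow. The undirected assumption is exactly what makes $V$ serve simultaneously for every active topology: writing $p_{ij}=p(\tilde o_j,\tilde o_i)$, for any symmetric nonnegative weight pattern the consensus term collapses to $\sum_i\tilde o_i\sum_j p_{ij}(\tilde o_j-\tilde o_i)=-\tfrac12\sum_{i,j}p_{ij}(\tilde o_i-\tilde o_j)^2\le 0$, so the switching in the bounded-confidence model never contributes positively. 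The condition $\delta_i=w_i^x$ is then used to complete the square on the remaining opinion and coupling terms, $-\sum_i\delta_i\tilde o_i^2+2\sum_i\delta_i\tilde x_i\tilde o_i=-\sum_i\delta_i(\tilde o_i-\tilde x_i)^2+\sum_i\delta_i\tilde x_i^2$.

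After this reduction the only terms left are collected into $R\equiv\sum_i\tilde x_i^2(1-\tilde o_i)\bigl[\delta_i-\beta_{ii}-\sum_j\beta_{ij}(1-\tilde x_j)\bigr]$, using $\Omega_i(1)=\sum_j\beta_{ij}+\beta_{ii}$; here I would invoke $1-\tilde o_i=o_i>0$ and $1-\tilde x_j=x_j\ge 0$ together with the hypothesis $\beta_{ii}>\delta_i$ to see that the bracket is bounded above by $\delta_i-\beta_{ii}<0$, whence $R\le 0$. Collecting the three pieces yields $\dot V=-\tfrac12\sum_{i,j}p_{ij}(\tilde o_i-\tilde o_j)^2-\sum_i\delta_i(\tilde o_i-\tilde x_i)^2+R\le 0$, and equality forces $\tilde o_i=\tilde x_i$ for all $i$ from the middle term and $\tilde x_i=0$ for all $i$ from $R$ (using $o_i>0$ so the factor $1-\tilde o_i$ cannot vanish), hence $\tilde z=\0_{2N}$. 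Thus $V$ is a strict common Lyapunov function and $\1_{2N}$ is (uniformly) asymptotically stable on the transformed domain, i.e. on $(0,1]^{2N}$.

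I expect the main obstacle to be twofold. First, the algebra closes only because both structural hypotheses are spent in precisely the right places — $\delta_i=w_i^x$ to symmetrize the coupling into the perfect square $\delta_i(\tilde o_i-\tilde x_i)^2$, and $\beta_{ii}>\delta_i$ to drive the residual adoption bracket strictly negative; the delicate step is verifying that the off-diagonal $\beta_{ij}$ terms land inside $R$ with the favorable sign $-\sum_j\beta_{ij}(1-\tilde x_j)\le 0$. Second, upgrading from $\dot V\le 0$ to strict decrease requires $o_i>0$, which is why the domain is $(0,1]^{2N}$ rather than $[0,1]^{2N}$; I would therefore append a short invariance argument, building on the well-posedness proposition, showing that $o_i>0$ and $x_i>0$ persist along trajectories — at $x_i=0$ one has $\dot x_i=o_i\beta_{ii}>0$ and at $o_i=0$ one has $\dot o_i\ge w_i^x x_i>0$ — so the strict-decrease region is forward invariant and standard common-Lyapunov theory for switched systems delivers uniform asymptotic stability.
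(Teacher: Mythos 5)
Your proposal is correct, and it lands on the same core mechanism the paper intends: reflect the hit equilibrium to the origin via $\tilde z = 1_{2N}-z$ (using $\gamma_i=1$ so the opinion dynamic is form-invariant), and use the single quadratic $V=\tfrac12\|\tilde z\|^2$ as a common Lyapunov function across all bounded-confidence topologies, with the undirected-graph and $\delta_i=w_i^x$ hypotheses supplying exactly the symmetry that makes one $V$ work for every switch. The paper in fact omits this proof entirely, stating only that it "follows similarly" to Theorem \ref{th:as0bc}; the intended route there is the linear comparison system $\dot{\tilde z}\le \hat P_j\tilde z$ with $\hat P_j$ symmetric, diagonally dominant, hence negative semidefinite. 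You instead differentiate $V$ along the full nonlinear flow and organize $\dot V$ into the graph sum of squares, the coupling square $-\sum_i\delta_i(\tilde o_i-\tilde x_i)^2$, and the residual $R$; I verified that this decomposition balances (the $-\sum_i\delta_i\tilde x_i^2\tilde o_i$ term that your intermediate "complete the square" sentence elides is correctly absorbed into $R$), and that $\beta_{ii}>\delta_i$ makes the bracket in $R$ strictly negative. What your version buys over the paper's sketch is an explicit characterization of $\{\dot V=0\}$ — forcing $\tilde x=\tilde o=0_{N}$ using $o_i>0$ — which is precisely the delicate point the paper's Theorem \ref{th:as0bc} proof handles only by restricting the domain to exclude kernel eigenvectors of $\hat P_j$ that are equilibria; your forward-invariance remark for $(0,1]^{2N}$ (using $\dot x_i=o_i\beta_{ii}>0$ at $x_i=0$ and $\dot o_i\ge w_i^x x_i>0$ at $o_i=0$, with $w_i^x=\delta_i>0$) is a detail the paper glosses over and is worth keeping.
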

    { Simulation have shown that} the system exhibits asymptotic stability without the conditions which enforce symmetry for the matrices $P_{i}$, i.e. if 
    the opinion graph is 
    directed and 
    $\Omega_{i}{(1)\neq} w_{i}^{x}, \ \forall i,$ the system still seems to be asymptotically stable. The theoretical characterization of such a case will be left for future work.

\section{Simulation}\label{sec:sim} 

In this section, we consider the consensus-adoption model in \eqref{eq:fullmodel} and show instances through numerical simulations where it exhibits behaviors that have been observed in empirical studies of adoptive spread \cite{Centola1194,centola2018experimental}. In particular, we focus on behaviors shown in the study of complex contagions, which aims to understand the effect of social reinforcement on adoption dynamics.  We show that the model exhibits different complex contagion behavior under certain opinion topologies by highlighting the importance of weak ties. We also show parameter regimes where our model behavior is dependent on initial conditions,  showing threshold-like behavior similar to that of a tipping point. We conclude the section by observing the existence of stable equilibria not described in the theoretical results.

\subsection{Complex Contagion: The Impact of Opinion}
\begin{figure*}[h]
\centering
\begin{subfigure}[b]{.65\columnwidth}
\centering
\begin{tikzpicture}
		[
		->,
		auto,
		node distance=0.5cm,
		every text node part/.style={align=center},
		scale=0.25, every node/.style={scale=.75}
		]
		\tikzstyle{every state}=[fill=none,text=black]
		\node[
		state,draw=blue
		] (n0) at(-3,3)
		{$1$};
		\node[
		state,draw=blue
		] (n1) at(-6,0)
		{$2$};
		\node[
		state,draw=blue
		] (n2) at(-3,-3)
		{$3$};
		\node[
		state,draw=red
		] (n3) at(0,0)
		{$4$};
		\node[
		state,draw=black
		] (n4) at(3,3)
		{$5$};
        		\node[
		state,draw=black
		] (n5) at(3,-3)
		{$6$};
        		\node[
		state,draw=black
		] (n6) at(6,0)
		{$7$};
		\path (n1) edge[<->] (n0)
        (n0) edge[<->] (n2)
        (n1) edge[<->] (n2)
        (n2) edge[->,bend left] (n3)
        (n3) edge[->,bend left,red,dashed] (n2)
        (n4) edge[->,bend left] (n3)
        (n3) edge[->,bend left,red,dashed] (n4)
        (n4) edge[<->] (n5)
        (n4) edge[<->] (n6)
        (n5) edge[<->] (n6);
		\end{tikzpicture}
		\vspace{2.5ex}
        \caption{Underlying Graph Structure $\mathcal{G}_{P}$: \\ 
        $\mathcal{G}_{O} \equiv \mathcal{G}_{P}$ (for Fig. \ref{fig:cc_same}) and has the \\
        red dashed edges removed for Fig. \ref{fig:cc_del}}
      \label{fig:cc_graph}
        \end{subfigure} \hspace{-.5cm}
\begin{subfigure}[b]{.45\columnwidth}
\centering
\includegraphics[width=1\columnwidth]{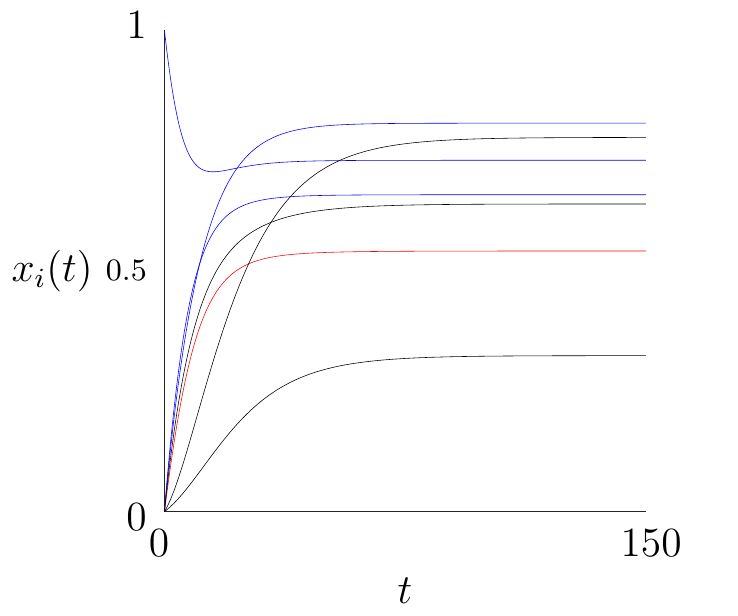}
\caption{Adoption when there is no coupling with opinion}
      \label{fig:cc_noc}
\end{subfigure} 
\begin{subfigure}[b]{.45\columnwidth}
\centering
\includegraphics[width=1\columnwidth]{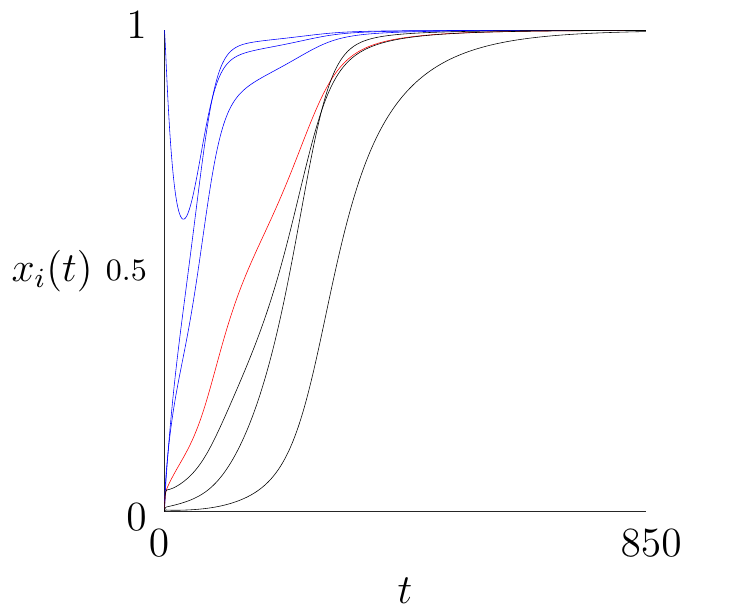}
\caption{Adoption when both graphs are identical}
      \label{fig:cc_same}
\end{subfigure}
\begin{subfigure}[b]{.45\columnwidth}
\centering
\includegraphics[width=1\columnwidth]{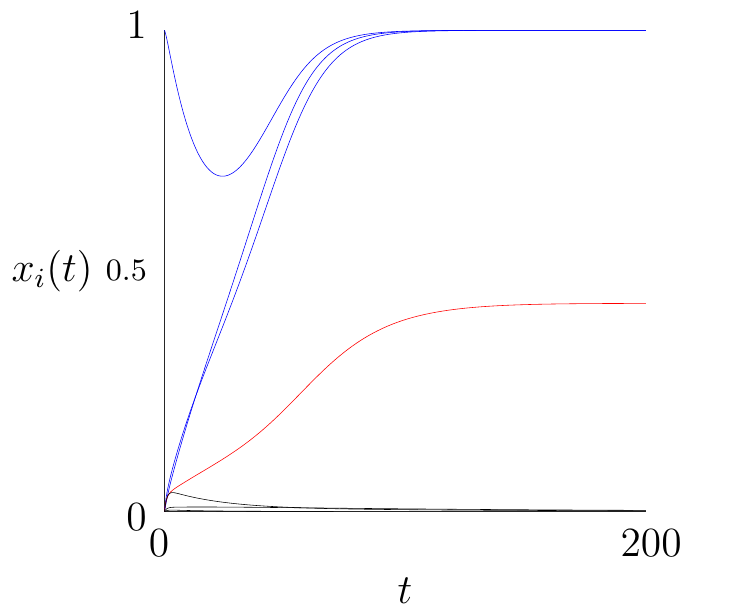}
\caption{Adoption when the red edges are deleted}
      \label{fig:cc_del}
\end{subfigure}
\caption{The adoption of a barbell graph with random parameters run starting from $x_{1}=1$ and $x_{j}=0 \ \forall j\neq 1$ for three separate conditions on the opinion graph. Figure \ref{fig:cc_noc} shows when there is no coupling with opinion (i.e. it is the standard SIS epidemic model). Figure \ref{fig:cc_same} shows when the opinion graph is an unweighted version of the infection graph. Figure \ref{fig:cc_del} shows when the opinion graph has the two red, dashed edges deleted from the adoption graph, making node $4$ an information bottleneck.} \label{fig:ops}
\end{figure*}
In the sociology literature, and specifically the study of complex contagions, a distinction is made between the impact of strong and weak friendship ties in a social network \cite{Centola1194,granweak}. A weak tie is characterized by the lack of many common friendships and low emotional intensity, representing for example a tie with an acquaintance. A strong tie is the opposite, having high intimacy and strong emotional intensity, for example a tie with a family member or close friend. Granovetter found that for information diffusion, such as the availability of jobs, weak ties are vital for the spread of information by serving as bridges between different communities\cite{granweak}. 

While weak ties are important for the spread of simple contagions, like information or a virus; in the case of a complex contagion these weak ties can serve as bottlenecks. This is due to the fact that complex contagions require social reinforcement to spread, which is often missing in the case of a weak tie\cite{Centola1194}. The coupled adoption opinion model presented in \eqref{eq:fullmodel} allows the study of the effect of the underlying graph structure of the opinion dynamic, which captures social reinforcement. Figure~\ref{fig:ops} shows that varying the underlying graph of the opinion dynamic can change a weak tie from a conduit to a bottleneck.

The coupled adoption opinion model is simulated on a $7$-node network with the adoption graph $\mathcal{G}_{P}$ shown in Figure~\ref{fig:cc_graph}. The opinion graph $\mathcal{G}_O$ is varied between Figures \ref{fig:cc_noc}-\ref{fig:cc_del}, however across all opinion graphs the edge weights follow $w_{ij}^{o}=1,~\forall i,j\in\{1,\dots,n\}$. The randomly chosen adoption parameters are as follows: 
\vspace{-3ex}

\footnotesize
$$B=\begin{bmatrix}0.0665&0.0668& 0.0630&0&0&0&0\\
0.0718&    0.0033&    0.0477&         0&         0&         0&         0\\
    0.0281&    0.0521&    0.0549&    0.0641 &        0 &        0 &        0\\
         0 &        0 &   0.0114&    0.0525 &   0.0480 &        0 &        0\\
         0&         0&         0&    0.0250&    0.0646&    0.0432&    0.0575\\
         0&         0&         0&         0&    0.0112&    0.0050 &   0.0346\\
         0&         0 &        0  &       0  &  0.0470&    0.0421&    0.0108\end{bmatrix}$$,
         \normalsize
         
         \noindent $D=\mathrm{diag}(0.0599,0.0208,0.0790,0.0767,0.0773,0.0813, \\ 0.0156)$, and { $\gamma_{i}= 1, \forall i$}. It holds that $\beta_{ii}>\delta_{i}, ~ \forall i\neq 1 $, suggesting that the population is likely to adopt the innovation. This system was initialized at $x(0)=[1 ,0 ,0 ,0 ,0 ,0 ,0]^{T}$ and $o(0)=[0.8279, 0.2410, 0.7215, 0.9841, 0.6457, 0.5573, 0.9630]^{T}.$

    As can be seen from Figure~\ref{fig:ops}, varying information topologies causes very different behavior in the model. If there is no coupling with an opinion, shown in Figure~\ref{fig:cc_noc}, then the innovation spreads from node $1$ to the entire population with varying success. If the adoption graph and the opinion graph are identical, as in Figure~\ref{fig:cc_same}, then the innovation is able to completely spread throughout the graph. 
    
    Figure~\ref{fig:cc_del} shows the impact of an information bottleneck: here node $4$ receives information about the opinion of node $3$ and node $5$ but does not spread information as the outgoing links have been deleted. This stops information about the innovation from nodes $1$ to $3$ from spreading to nodes $5$ to $7$. Consequently, the information bottleneck, node $4$, is also an adoption bottleneck; i.e. the information topology prevents the product from spreading into the right half of the graph.   
    
We have shown the impact that opinion can have on adoption behavior in the model presented in \eqref{eq:fullmodel}, suggesting that this model is a valuable tool for understanding the social reinforcement effects typically studied via threshold models.  
    \vspace{-.2cm}

\subsection{Tipping Point-Like Behavior}
A central concept in the spread of innovations 
is the tipping point \cite{centola2018experimental,schelling1971dynamic,granovetter1978threshold}. A tipping point on a population level is a fraction of adopters which determines the prevalence of the product. If the adoption level is under the tipping point, the product does not spread to the whole population while if the adoption is above the tipping point the product spreads to the whole population. Lemma~\ref{lem:eqast} describes an equilibrium point which occurs if the coupled adoption opinion model satisfies for all $i$ that $\delta_{i}=\sum_{\mcal{N}_i^A} \beta_{ij} x^{\ast}_{j} +\beta_{ii}$, $x^{\ast}_{i}=o^{\ast}_{i}$ and $\sum_{j\in\mathcal{N}_{o}^{i}} x_{j}-x_{i}=0$. Theorem \ref{th:unstab} shows that this equilibrium is unstable if the opinion graph is strongly connected and { $\gamma_{i}= 1, \forall i$}. Simulations show that if such an unstable equilibrium exists in $(0,1)^{2N}$, it will induce initial condition dependent behavior in the model, similar to a tipping point. 

This dependence is shown in Figure~\ref{fig:mc}. The adoption graph $\mathcal{G}_{P}$ is a complete graph on four nodes and the opinion topology $\mathcal{G}_{O}$ is taken to be a complete graph with edge weights $w_{ij}^{o}=1,~\forall i,j$. The system parameters are $$B=\begin{bmatrix}{0.1}&{0.25}&{0.3}&0.35\\
    0.15&{0.05}&0.3&{0.3}\\
    {0.5}&0.3&{0.1}&0.3\\
    0.2 &{0.1}&{0.1}&{0.2}\end{bmatrix},$$ $D=\mathrm{diag}(.5,.4,.6,.3)$, { and $\gamma_{i}= 1, \forall i$.} With these parameters, $x_{i}=o_{i}=.5,~\forall i$ is an equilibrium as described in Lemma~\ref{lem:eqast}. The system was run $10,000$ times with randomly selected initial conditions {under both the consensus and the bounded confidence model with $\xi=.01$}. The results are shown in Figure~\ref{fig:mc}. When the initial condition is sufficiently high, the system converges to the hit equilibrium, $1_{2N}$. Conversely if the system has an initial condition that was sufficiently low, the system converges to flop equilibrium, $0_{2N}$. {In the bounded confidence case, one or more nodes can break away from their neighbors, resulting in a split equilibrium.} Under the current system parameters, Figure~\ref{fig:mc} suggests that initial adoption has a larger impact on the resultant equilibrium. { This matches what one might expect based on the conditions of Theorem \ref{thm:0global} and \ref{th:as1},} however further work is required to completely characterize the tipping point-like behavior of the system {and how a system which shows these behaviors would react to control effort}. 

\begin{figure}
	\centering
		\includegraphics[width=.48\columnwidth]{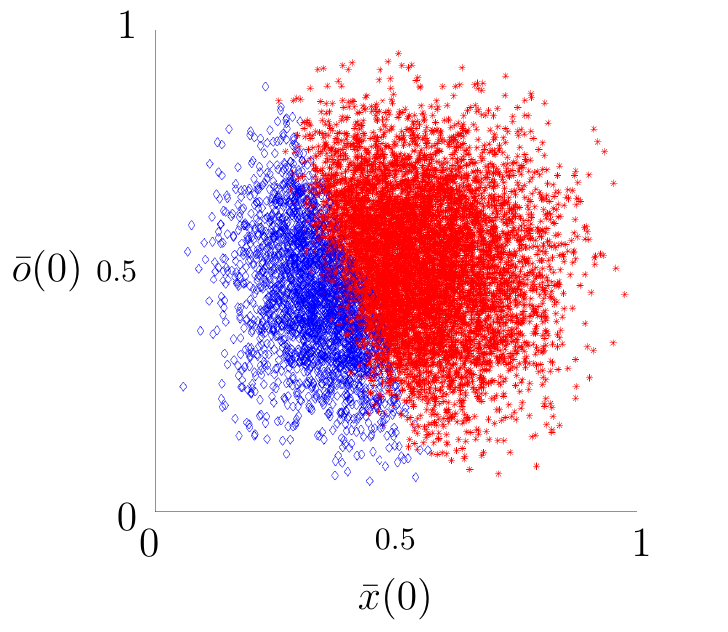}
		\includegraphics[width=.48\columnwidth]{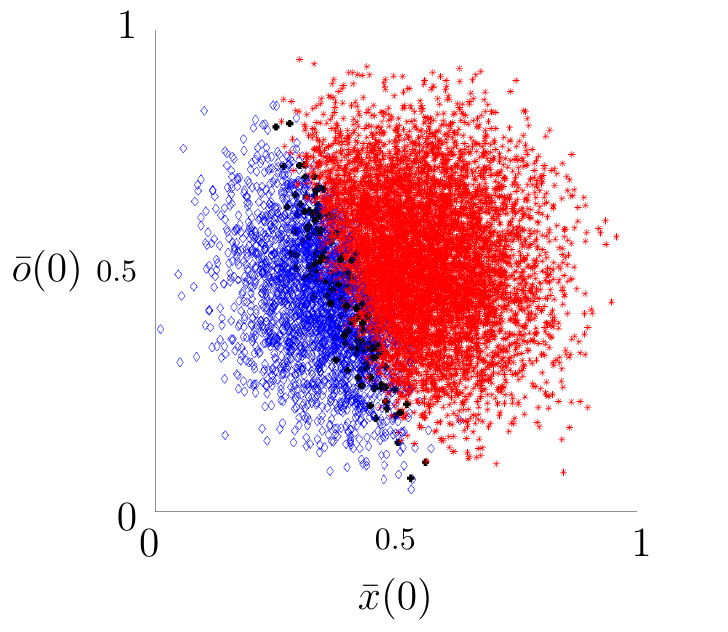}
        \caption{Simulation of the effect of initial conditions on the system equilibria in the presence of an unstable equilibrium. The x-axis shows mean initial adoption, the y-axis mean initial opinion. The blue diamonds and red stars represent when the system converged to $0_{2N}$ and $1_{2N}$, respectively. The right figure shows what happens when the bounded confidence model is run as the opinion dynamic, resulting in systems that do not converge to $0_{2N}$ or $1_{2N}$, shown by black crosses. 
        } \label{fig:mc}
 
\end{figure}

\subsection{Stable Equilibrium}
While this paper has focused on the analytical characterization of the hit, the equilibrium at $1_{2N}$, and the flop, the equilibrium at $0_{2N}$, it is also possible that there exists a stable equilibrium in the open interval $(0,1)^{2N}$. While the characterization of such equilibria will require future work, it is possible to discuss their  existence through simulation. 

One way that these stable equilibria can exist is if there is a mix between nodes which satisfy $\delta_{i}>
\Omega_i{(1)}$ 
and nodes which satisfy $\delta_{i}<\beta_{ii}$. 

Figure \ref{fig:stabstar} considers a $5$-node star graph topology with node $1$ as the center node. When $D=\mathrm{diag}(5,0.1,0.1,0.1,0.1)$ and $$B=\begin{bmatrix} 0.1& 0.2& 0.2& 0.2& 0.2\\
0.01& 0.15& 0& 0& 0\\
0.01& 0& 0.15& 0& 0\\
0.01& 0& 0& 0.15& 0\\
0.01& 0& 0& 0& 0.15
\end{bmatrix}$$ 
the system follows the trajectory in Figure \ref{fig:stabstar}. The center node satisfies $\delta_{1}>
\Omega_1$
while the peripheral nodes satisfy $\delta_{k}<\beta_{kk},\forall k\neq 1$. Here $\delta_{1}$ is large enough that the nodes will not converge to $1_{2N}$ but instead reach an equilibrium value in the interval $(0,1)^{2N}$. Figure \ref{fig:stabstar} shows the evolution of the system under a sample initial condition. The equilibrium at 
\begin{align*}
    x=\begin{bmatrix}0.1114 & 0.6829 7   0.6829 & 0.6829 &    0.6829\end{bmatrix}^T, \\  o=\begin{bmatrix}0.4924    & 0.5877 &   0.5877 &   0.5877   & 0.5877\end{bmatrix}^T
\end{align*} 
was shown to be unique by running the system $1000$ times with varying initial conditions. Simulations show that under these parameters the hit and flop equilibria are  both unstable. 
    
\begin{figure}[h]
\includegraphics[width=.48\columnwidth]{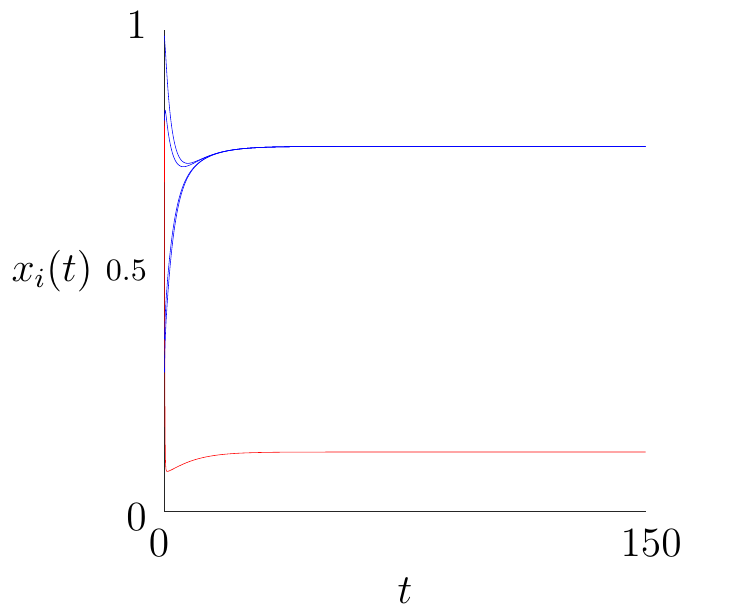}
\includegraphics[width=.48\columnwidth]{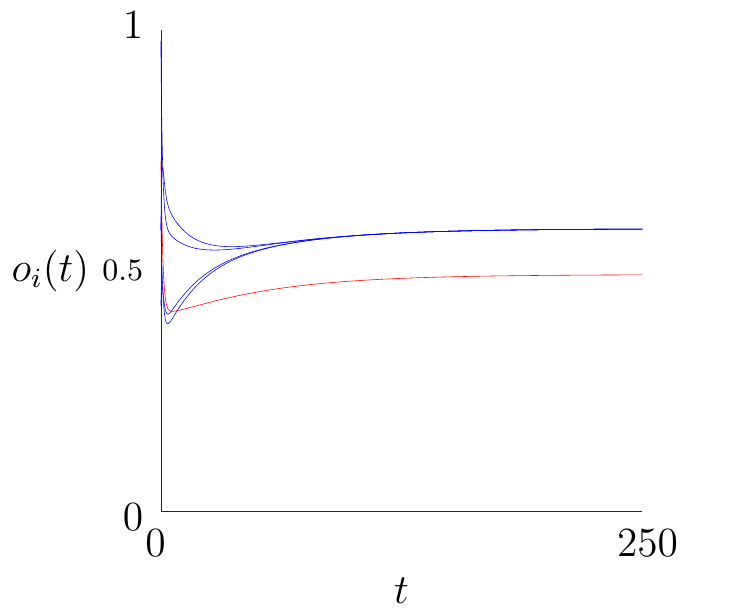}
\caption{System evolution, with adoption on the left and opinion on the right, on the star graph where there is a stable equilibrium in $(0,1)^{2N}$. The red line represents the center node, while the blue lines represent the peripheral nodes.} \label{fig:stabstar}
\end{figure}

{Another way this can happen is if $\gamma_{i}<1,~\forall i$ and the $0_{2N}$ equilibrium is unstable. Figure~\ref{fig:stabcomp} shows the results for a $20$ node complete graph where $B=.0051_{N\times N}+.145I$ (i.e. $\beta_{ii}=.15$), $D=\mathrm{diag}(.1)$, the opinion graph is complete with unit weights, and $\gamma_{i}=.75, ~\forall i$ which was initialized at $.5_{2N}$. At equilibrium $x_{i}^{\ast}=0.718,~\forall i$ and $o_{i}^{\ast}=0.5385=\gamma_{i}x_{i}^{\ast},~\forall i$. }

\begin{figure}[h]
\includegraphics[width=.48\columnwidth]{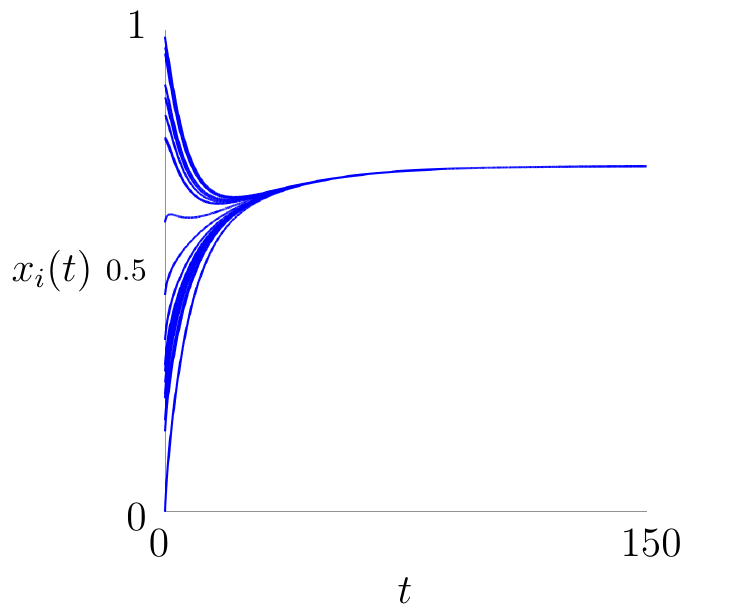}
\includegraphics[width=.48\columnwidth]{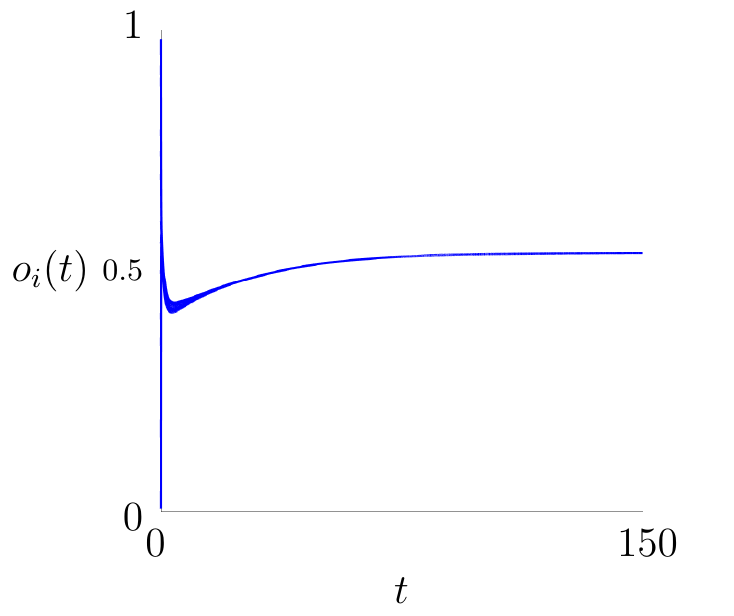}
\caption{System evolution, with adoption on the left and opinion on the right, on a complete graph where there is a stable equilibrium in $(0,1)^{2N}$ caused by $\gamma_{i}<1,~\forall i$.} \label{fig:stabcomp}
\end{figure}

\section{Conclusion}\label{sec:con}
In this paper, we consider a coupled adoption opinion model in which the spread of an epidemic product is influenced by the evolution of an opinion dynamic. The stability of the hit and flop equilibria are shown based on the adoption parameters and the opinion graph. These results are also extended to the case of the bounded confidence opinion dynamic. Finally, it was shown in simulation that the presented model exhibits many of the characteristic behaviors of product spread observed in the sociology literature. 

This paper provides a dynamical systems viewpoint for the coupling between adoption and opinion, and ultimately provides an avenue to deepen the understanding of complex contagion adoption dynamics. Behavior observed from the sociology literature, such as tipping points, were exhibited by the model, pointing to its ability to capture a wide range of real world behaviors. Future work is required to explore and connect these behaviors to real contagion phenomenon { and the possibility of applying control.}

\vspace{-1ex}

\section{Appendix}
In this section, the proofs of the various technical results in the paper are presented. First the stability of the $1_{2N}$ and $0_{2N}$ equilibria are discussed for the original model presented in \eqref{eq:fullmodel}, which requires some concepts from matrix analysis which are discussed below. Unless otherwise stated the discussion follows \cite{horn1990matrix}.

\begin{definition}A matrix $A$ is {\em diagonally dominant} if 
\begin{equation}
|A_{ii}|\geq \sum_{j\neq i} |A_{ij}| \ \forall i.
\end{equation}
{
The matrix is {\em strictly diagonally dominant} if the inequality is strict.
}
\end{definition}

\noindent Consider a diagonally dominant matrix $A$ and let \begin{equation}
J\\=\left\{i\in\{1,2,\dots,n\}:|a_{ii}| > \sum_{j\neq i} |a_{ij}| \right\}.
\end{equation}
Any row $j$ such that $j\in J$ is said to be a strictly diagonally dominant row. 

A diagonally dominant matrix with negative diagonal entries has eigenvalues with non-positive real part by 
{
the Gershgorin Disc Theorem
}
a strictly diagonally dominant matrix with negative diagonal entries has eigenvalues with negative real part by 
{
the Gershgorin Disc Theorem. 
}

\begin{definition}
A matrix A is weakly chained diagonally dominant (WCDD) if it is 
\begin{itemize}
\item diagonally dominant, and 
\item for all $i\notin J$ there is a sequence of nonzero elements of A of the form $a_{ii_{1}},a_{i_{1}i_{2}},\dots,a_{i_{r}j}$ with $j\in J$.
\end{itemize}
\end{definition}
\noindent The second condition can be equivalently expressed as the existence of a walk from $i$ to $j$ on the directed graph of $A$. 
WCDD matrices have the following characterization \cite{shivakumar1974sufficient}:
\begin{lemma}\label{lem:wcdd}
A WCDD matrix is nonsingular.
\end{lemma}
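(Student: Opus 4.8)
The plan is to argue by contradiction using the classical maximum-component technique for null vectors. Suppose $A$ is singular, so there is a nonzero $x$ with $Ax = 0$; after rescaling, assume $\max_i |x_i| = 1$ and let $M = \{i : |x_i| = 1\}$ be the nonempty set of indices attaining this maximum.

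First I would extract the consequences of the null-vector equation at a maximal index. For $i \in M$, the $i$th equation $a_{ii} x_i = -\sum_{j \neq i} a_{ij} x_j$ together with $|x_i| = 1$ and $|x_j| \leq 1$ gives $|a_{ii}| \leq \sum_{j \neq i} |a_{ij}| |x_j| \leq \sum_{j \neq i} |a_{ij}|$. Combined with the diagonal dominance bound $|a_{ii}| \geq \sum_{j \neq i} |a_{ij}|$, this forces equality throughout, yielding two facts: (a) row $i$ is not strictly dominant, i.e. $i \notin J$; and (b) equality in $\sum_{j \neq i} |a_{ij}| |x_j| = \sum_{j \neq i} |a_{ij}|$ forces $|x_j| = 1$ whenever $a_{ij} \neq 0$, so every out-neighbor of $i$ in the directed graph of $A$ again lies in $M$.

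The main step, where the chain hypothesis does the work, is to derive a contradiction from (a) and (b). Pick any $i \in M$. By (a), $i \notin J$, so the WCDD condition provides a walk $a_{ii_{1}}, a_{i_{1}i_{2}}, \dots, a_{i_{r}j}$ of nonzero entries terminating at some $j \in J$. Applying (b) repeatedly along this walk---from $i \in M$ and $a_{ii_{1}} \neq 0$ we get $i_{1} \in M$, then $i_{1} \in M$ and $a_{i_{1}i_{2}} \neq 0$ give $i_{2} \in M$, and so on---shows $j \in M$. But by (a) every element of $M$ lies outside $J$, hence $j \notin J$, contradicting $j \in J$. Therefore no nontrivial null vector exists and $A$ is nonsingular.

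The only real obstacle is the bookkeeping in the equality case: one must track carefully which inequality forces which conclusion. Here only the equality $\sum_{j \neq i} |a_{ij}| |x_j| = \sum_{j \neq i} |a_{ij}|$ is needed to get (b), while the equality $|a_{ii}| = \sum_{j \neq i} |a_{ij}|$ gives (a); the triangle-inequality step $|\sum_{j \neq i} a_{ij} x_j| \leq \sum_{j \neq i} |a_{ij}||x_j|$ need not be analyzed at all. Once (a) and (b) are in hand, the propagation of membership in $M$ along the guaranteed walk is immediate and no further estimates are required.
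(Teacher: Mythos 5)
Your proof is correct. Note that the paper does not actually prove this lemma --- it imports it directly from Shivakumar and Chew \cite{shivakumar1974sufficient} --- so there is no in-paper argument to compare against; what you have written is the standard self-contained proof (normalize a putative null vector, show the maximal-modulus index set $M$ is closed under out-neighbors and disjoint from $J$, then walk along the WCDD chain into $J$ to force a contradiction). The equality bookkeeping is handled correctly: from $|a_{ii}| \leq \sum_{j\neq i}|a_{ij}||x_j| \leq \sum_{j\neq i}|a_{ij}| \leq |a_{ii}|$ you extract exactly the two facts needed, namely $i\notin J$ and $|x_j|=1$ whenever $a_{ij}\neq 0$, and you are right that the equality case of the triangle inequality itself carries no additional information you need. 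The argument also covers the degenerate strictly dominant case for free, since then $M\cap J^c=\emptyset$ already contradicts $M\neq\emptyset$.
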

Recall the following condition for Metzler matrices from~\cite{berman1994nonnegative}:
\begin{lemma}
Let $A$ be an irreducible Metzler Matrix. In the following, for two vectors $x,y\in\mathbb{R}^{N}$, $x>y$ 
means $x_i > y_i \forall i$. 

\begin{itemize}\label{lem:metzeig}
\item If there exists $x>0_{N}$ such that $Ax>\lambda x$ for some $\lambda \in \mathbb{R}$, then $\alpha(A)>\lambda.$
\item If there exists $x>0_{N}$ such that $\mu x>Ax$ for some $\mu\in\mathbb{R}$, then $\mu >\alpha(A).$
\end{itemize}

\end{lemma}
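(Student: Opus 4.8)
The plan is to reduce both claims to the existence of a strictly positive \emph{left} Perron eigenvector associated with $\alpha(A)$, after which each inequality follows from a single left-multiplication of the hypothesis by that eigenvector.

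First I would set up the Perron--Frobenius structure. Since $A$ is Metzler, choosing any constant $c>0$ larger than the magnitude of every diagonal entry of $A$ makes $M \equiv A + cI$ entrywise nonnegative; moreover $M$ inherits irreducibility from $A$, because adding a multiple of the identity does not change the off-diagonal sparsity pattern. By the Perron--Frobenius theorem applied to the irreducible nonnegative matrix $M^{T}$, there is a strictly positive vector $v>0_N$ with $v^{T}M = \rho(M)\,v^{T}$, where $\rho(M)$ is the simple, real, dominant eigenvalue. Since $Mx=\lambda x \iff Ax=(\lambda-c)x$, the spectra satisfy $\sigma(M)=\sigma(A)+c$, so $\alpha(A)=\rho(M)-c$ and $v$ is also the left eigenvector of $A$ for this value, i.e. $v^{T}A = \alpha(A)\,v^{T}$.

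Next I would dispatch the two claims, which are now immediate and symmetric. For the first, suppose $x>0_N$ satisfies $Ax>\lambda x$, so that $Ax-\lambda x>0_N$ entrywise and strictly. Because $v>0_N$, the scalar $v^{T}(Ax-\lambda x)=\sum_i v_i\,(Ax-\lambda x)_i$ is a sum of strictly positive terms and is therefore strictly positive, giving $v^{T}Ax > \lambda\,v^{T}x$. Substituting $v^{T}A=\alpha(A)\,v^{T}$ yields $\alpha(A)\,v^{T}x > \lambda\,v^{T}x$, and dividing by the strictly positive scalar $v^{T}x=\sum_i v_i x_i>0$ produces $\alpha(A)>\lambda$. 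For the second claim, left-multiplying $\mu x>Ax$ by $v^{T}>0_N$ gives $\mu\,v^{T}x > v^{T}Ax = \alpha(A)\,v^{T}x$, and dividing by $v^{T}x>0$ yields $\mu>\alpha(A)$.

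The only delicate step is securing the strictly positive left eigenvector $v$, which is exactly where the irreducibility hypothesis is indispensable: without it the Perron eigenvector may contain zero entries, and the argument that $v^{T}(Ax-\lambda x)>0$ would collapse to a non-strict inequality, losing the strict conclusion. Everything downstream rests only on the elementary observation that the inner product of two entrywise-positive vectors is strictly positive, which is precisely the mechanism that transports the strictness of the hypothesis into the strictness of the spectral bound.
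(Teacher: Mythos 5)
Your proof is correct. The paper does not actually prove this lemma --- it simply recalls it from Berman and Plemmons \cite{berman1994nonnegative} --- and your argument (shift $A$ by $cI$ to obtain an irreducible nonnegative matrix, extract the strictly positive left Perron eigenvector of $A$ via $v^{T}A=\alpha(A)v^{T}$, and pair it against the strict entrywise inequality) is exactly the standard proof one finds in that reference, with the one genuinely delicate point, the strict positivity of $v$ coming from irreducibility, correctly identified and used.
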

\begin{lemma}\label{lem:posdiag}({ Proposition 1 and }Lemma A.1 in \cite{khanafer2016stability})  Suppose that $M$ is an irreducible Metzler matrix such that { $\alpha(M) < 0$ } ($\alpha(M) = 0$). Then,
there exists a positive diagonal matrix $Q$ such that $M^{T}Q +QM$ is negative { definite } (semi-definite).
\end{lemma}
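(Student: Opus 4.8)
The plan is to construct $Q$ explicitly from the Perron--Frobenius eigenvectors of $M$ and then transfer the problem from $M$ to the symmetric matrix $S := M^{T}Q + QM$, which turns out to be itself an irreducible Metzler matrix. Because $M$ is irreducible Metzler, $\alpha(M)$ is a real eigenvalue admitting a strictly positive right eigenvector $v \gg 0_{N}$ with $Mv = \alpha(M)\,v$; since $M^{T}$ is also irreducible Metzler with the same spectrum, there is a strictly positive $w \gg 0_{N}$ with $M^{T}w = \alpha(M)\,w$. I would then set $Q = \mathrm{diag}(w_{i}/v_{i})$, which is a positive diagonal matrix satisfying the key identity $Qv = w$.

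The central computation is to evaluate $S$ on $v$. Using $Mv = \alpha(M)v$, $M^{T}w = \alpha(M)w$, and $Qv = w$, one obtains $Sv = M^{T}(Qv) + Q(Mv) = M^{T}w + \alpha(M)\,Qv = 2\alpha(M)\,w$. Thus the action of $S$ on the positive vector $v$ is controlled entirely by the sign of $\alpha(M)$. Next I would observe that $S$ is symmetric and, because $M$ is Metzler, has nonnegative off-diagonal entries $S_{ij} = q_{j}M_{ji} + q_{i}M_{ij} \geq 0$ for $i\neq j$; hence $S$ is a symmetric irreducible Metzler matrix, its graph being the symmetrization of the strongly connected graph of $M$. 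Being symmetric, $S$ is negative (semi)definite exactly when $\alpha(S) < 0$ (respectively $\alpha(S) \leq 0$).

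To pin down the sign of $\alpha(S)$ I would invoke the eigenvalue inequalities for irreducible Metzler matrices in Lemma~\ref{lem:metzeig}, applied to $A = S$ with the test vector $x = v$. When $\alpha(M) < 0$, the computation gives $Sv = 2\alpha(M)\,w \ll 0_{N}$, so $0\cdot v > Sv$ and the lemma yields $\alpha(S) < 0$, whence $S$ is negative definite. When $\alpha(M) = 0$, we have $Sv = 0_{N}$, so $\varepsilon v > Sv$ for every $\varepsilon > 0$, giving $\alpha(S) \leq \varepsilon$ for all $\varepsilon > 0$ and therefore $\alpha(S) \leq 0$, whence $S$ is negative semi-definite.

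The step requiring the most care is the semi-definite case, since the strict inequalities of Lemma~\ref{lem:metzeig} do not apply directly; this is resolved by the limiting argument $\varepsilon \downarrow 0$. I would also need to justify carefully that $S$ inherits irreducibility from $M$, so that the Perron--Frobenius reasoning embodied in Lemma~\ref{lem:metzeig} is valid for $S$, and that the Perron eigenvectors $v,w$ are \emph{strictly} positive so that $v$ is an admissible test vector and the diagonal entries $w_{i}/v_{i}$ are well defined and positive.
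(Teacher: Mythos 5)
Your proof is correct. Note that the paper does not actually prove this statement: it is imported verbatim from Proposition~1 and Lemma~A.1 of \cite{khanafer2016stability}, so there is no internal argument to compare against. What you have written is, in essence, the classical diagonal-Lyapunov construction for irreducible Metzler matrices that underlies the cited result: take the right and left Perron eigenvectors $v,w \gg 0_N$ (strict positivity coming from Perron--Frobenius applied to $M+sI \geq 0$ for large $s$), set $Q=\mathrm{diag}(w_i/v_i)$, and observe that $S=M^TQ+QM$ is a symmetric irreducible Metzler matrix with $Sv=2\alpha(M)w$, after which Lemma~\ref{lem:metzeig} pins down the sign of $\alpha(S)$. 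All the delicate points are handled properly: $S$ inherits irreducibility because its graph is the symmetrization of the strongly connected graph of $M$; symmetry converts $\alpha(S)<0$ (resp.\ $\leq 0$) into negative (semi-)definiteness; and the $\varepsilon\downarrow 0$ argument correctly circumvents the strict inequalities in Lemma~\ref{lem:metzeig} in the semidefinite case (alternatively, $Sv=0_N$ with $v\gg 0_N$ directly identifies $0$ as the Perron root of $S$). The only thing you would need to add for a fully self-contained write-up is the one-line reduction of the Metzler Perron theory to the nonnegative case via $M+sI$, or a citation to \cite{berman1994nonnegative}.
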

With the mathematical preliminaries concluded, stability of the equilibrium point at $z^\ast=0_{2N}$ can be shown. 

{
\begin{proof}[Proof of Lemma \ref{lem:loc0}]
The Jacobian of the dynamics can be written in block form as:
\vspace{-1.5ex}
\renewcommand\arraystretch{2}
\begin{equation}
J(z) = \left[
\begin{array}{c|c}
\displaystyle\frac{\partial f}{\partial x} & \displaystyle\frac{\partial f}{\partial o} \\
[\smallskipamount]
\hline 
\displaystyle\frac{\partial g}{\partial x} & \displaystyle\frac{\partial g}{\partial o}
\end{array}
\right],
\end{equation}
where first $N$ rows of the Jacobian are given by \eqref{eq:dfdxi}-\eqref{eq:dfdoj} and the second $N$ rows are given by \eqref{eq:dg'dxi}-\eqref{eq:dg'doj}. 
\renewcommand\arraystretch{1.25}

Consider the Jacobian matrix at the equilibrium point \\ 
\vspace{-1ex}
$z^\ast=0_{2N}$: 
\vspace{-2ex}
\begin{equation}
J(z^*) = \left[
\begin{array}{c|c}
\displaystyle \text{diag}\left(-\delta_{i}  \right) & \displaystyle \text{diag}\left(  \beta_{ii} \right)  \\
\hline 
W{ \Gamma} & -(\mathcal{L}_{O}+W) 
\end{array}
\right],\label{eq:sjab}
\end{equation}
using the notation from \eqref{eq:ovec}. To show local stability, this Jacobian must be shown to be Hurwitz.

In the case that $\delta_{i}>\beta_{ii} \ \forall i$, together with the fact that { $\gamma_{i}\leq 1,~ \forall i$ } and the graph Laplacian is diagonally dominant, $
J(z^*)$ is diagonally dominant. The first $N$ rows of $
J(z^*)$ are strictly diagonally dominant  while the diagonal dominance of the second $N$ rows of $
J(z^*)$ depends on $\gamma_{i}$ and $w_{i}^{x}$. If $\gamma_{i}=1$ the row is diagonally dominant, if $\gamma_{i}<1$ the row is strictly diagonally dominant. If $w_{i}^{x}=0$ the row is diagonally dominant; otherwise, it is strictly diagonally dominant. However, by Assumption \ref{ass:w} $\exists j \textrm{ s.t. } w_{j}^{x}>0$ and by Assumption \ref{ass:og} the opinion graph is strongly connected. Therefore there is a path from row $j$ to all rows $k\in\{N+1,N+2,\dots,2N\}$.  

So the Jacobian is WCDD and therefore nonsingular by Lemma \ref{lem:wcdd}.

Since the diagonal elements of $
J(z^*)$ are negative, the above argument combined with 
{
the Gershgorin Disc Theorem
}
shows that the Jacobian is Hurwitz.

If $\delta_{i}>\gamma\beta_{ii} \ \forall i$, there exists a $\rho>1$ such that $\delta_{i}>(\rho\gamma)\beta_{ii} \ \forall i$. As the opinion graph is strongly connected the Jacobian is irreducible allowing the use of Lemma \ref{lem:metzeig}. Consider a vector $y$ of the form $y=\begin{bmatrix} {1}_{N}^{T} &\rho\gamma 1_{N}^{T}
		\end{bmatrix}^{T}.$
		Then consider the matrix product $J(z^{\ast})y$. The first $N$ entries of $J(z^{\ast})y$ follow:
		\begin{equation}
		-\delta_{i}+\beta_{ii}\rho\gamma <0.\\
		\end{equation}
		The last $N$ entries of $J(z^{\ast})y$ follow
		\begin{equation}
		 \gamma_{i} w_{i}^{x}-d_{i}^{O}\rho\gamma-w_{i}^{x}\rho\gamma + \sum_{\mcal{N}_i^O} \rho\gamma 
		=(\gamma_{i}-\rho\gamma) w_{i}^{x}<0.
		\end{equation}
		Therefore by Lemma \ref{lem:metzeig} the Jacobian is Hurwitz.
\end{proof}
}
Having shown local stability of the flop equilibrium $z^{\ast}=0_{2N}$, we move to showing asymptotic stability. 

Before proving Theorem \ref{thm:0global}, we introduce some lemmas that will be required for the proof. Consider the matrix \begin{equation}P=\begin{bmatrix}
-\bar{B} & \bar{B} \\
W{\Gamma} & -(\mathcal{L}_{o}+W)
\end{bmatrix}\label{eq:Pmat}\end{equation}
where $\bar{B}=\mathrm{diag}\left(\Omega_i{(\tau)}\right)$. 
\begin{lemma}\label{lem:Pmat}
If $\delta_{i}>\Omega_i{(\tau)}, ~\forall i$ { and the system is in $[0,\tau]^{N}\times[0,1]^{N}$}, then the coupled dynamic in \eqref{eq:fullmodel} satisfies $\dot{z}\leq Pz$.
\end{lemma}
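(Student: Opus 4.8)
The plan is to establish the component-wise inequality $\dot z \le Pz$ by examining the two blocks of the dynamics separately and using the constraint that the state lies in $[0,\tau]^N \times [0,1]^N$. First I would observe that the opinion dynamics in \eqref{eq:smod2} are exactly linear: writing them in vector form as in \eqref{eq:ovec} gives $\dot o = W\Gamma x - (\mathcal{L}_o + W)o$, which is precisely the second block-row of $Pz = W\Gamma x - (\mathcal{L}_o+W)o$. Hence the opinion components satisfy the inequality with equality, and no estimation is needed there. All the work is in the adoption block.

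For the adoption block I would bound each $\dot x_i = f_i(x,o)$ from above by the corresponding entry of the first block-row of $Pz$, namely $-\bar B_{ii}\,x_i + \bar B_{ii}\,o_i = \Omega_i(\tau)(o_i - x_i)$. Starting from \eqref{eq:prod},
\begin{equation}
\dot x_i = -\delta_i x_i(1-o_i) + (1-x_i)o_i\Big(\textstyle\sum_{\mcal{N}_i^A}\beta_{ij}x_j + \beta_{ii}\Big),
\end{equation}
I would use that on $[0,\tau]^N$ each neighbor satisfies $x_j \le \tau$, so the adoption-rate factor is bounded by $\Omega_i(\tau)$ as defined in \eqref{eq:omega}. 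Then, using $1-x_i \le 1$ and dropping the factors $(1-o_i)$ and $(1-x_i)$ appropriately (all of which lie in $[0,1]$ by Proposition~1), I would show that $\dot x_i \le -\delta_i x_i(1-o_i) + \Omega_i(\tau)\,o_i$ and argue this is at most $\Omega_i(\tau)(o_i - x_i)$. The key algebraic step is comparing $-\delta_i x_i(1-o_i)$ against $-\Omega_i(\tau)x_i$: since the hypothesis gives $\delta_i > \Omega_i(\tau)$ and $(1-o_i)\le 1$, I would need $\delta_i(1-o_i) \ge \Omega_i(\tau)$ to fail in general, so the cleaner route is to bound $\dot x_i$ directly, retaining enough negativity from the $-\delta_i x_i(1-o_i)$ term.

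The main obstacle I anticipate is precisely this comparison: the condition $\delta_i > \Omega_i(\tau)$ controls the disadoption term only when $o_i$ is not too close to $1$, because the effective disadoption rate is $\delta_i(1-o_i)$. The careful step will be to show $-\delta_i x_i(1-o_i) + (1-x_i)o_i\Omega_i(\tau) \le \Omega_i(\tau)(o_i - x_i)$, i.e. after rearranging, $\Omega_i(\tau)x_i - \delta_i x_i(1-o_i) - (1-x_i)o_i\Omega_i(\tau) + \Omega_i(\tau)o_i \le 0$; I would expand this and collect terms in $x_i$ and $o_i$, using $\delta_i > \Omega_i(\tau)$ together with $x_i,o_i \in [0,1]$ to verify each grouped term has the correct sign. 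Once the adoption block inequality is confirmed and combined with the exact opinion block equality, the stacked inequality $\dot z \le Pz$ follows component-wise, completing the proof.
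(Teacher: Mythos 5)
Your proposal follows essentially the same route as the paper: bound the adoption-rate factor by $\Omega_i(\tau)$ on $[0,\tau]^N$, observe the opinion block of $Pz$ holds with equality, and reduce to showing $-\delta_i x_i(1-o_i) + (1-x_i)o_i\Omega_i(\tau) \le \Omega_i(\tau)(o_i - x_i)$. The ``main obstacle'' you anticipate dissolves if you keep the factor $(1-o_i)$ attached when replacing $\delta_i$ by $\Omega_i(\tau)$: since $x_i(1-o_i)\ge 0$ and $\delta_i>\Omega_i(\tau)$ one has $-\delta_i x_i(1-o_i)\le -\Omega_i(\tau)x_i(1-o_i)$, and then the identity $-x_i(1-o_i)+(1-x_i)o_i = o_i-x_i$ gives the bound immediately (equivalently, the difference of your two sides factors exactly as $x_i(1-o_i)\bigl(\Omega_i(\tau)-\delta_i\bigr)\le 0$), which is precisely the paper's argument, with $x_i=0$ or $o_i=1$ handled as trivial boundary cases. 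One caution: your written rearrangement flips the signs of the $(1-x_i)o_i\Omega_i(\tau)$ and $\Omega_i(\tau)o_i$ terms, but the expand-and-collect verification you propose would catch and correct this.
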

\begin{proof}
Consider the adoption dynamic for the case where the adoption parameters satisfy $\delta_{i}>\Omega_i{(\tau)}, \forall i$ and when the state satisfies $x_i\neq 0$ and $o_i\neq 1$: %
\vspace{-2ex}
\begin{align}
\dot{x}_{i}&= -\delta_i x_i (1-o_i) + (1-x_i)o_i\left( \sum_{\mcal{N}_i^A}\beta_{ij} x_j +\beta_{ii} \right)\\
	&\leq -\delta_i x_i (1-o_i) + (1-x_i)o_i\Omega_i{(\tau)}\\ 
    	&< -\Omega_i{(\tau)} x_i (1-o_i) + (1-x_i)o_i\Omega_i{(\tau)}\\
        &=(o_{i}-x_i)\Omega_i{(\tau)}.
\end{align}
Consider now when $x_i=0$ or $o_i=1$. In the case that $o_{i}=1$ 
then $\dot{x}_{i}\leq(1-x_{i})\Omega_i{(\tau)}$ and $(o_{i}-x_i)\Omega_i{(\tau)}=(1-x_{i})\Omega_i{(\tau)}$. In the case that $x_{i}=0$ 
then $\dot{x}_{i}\leq o_{i}\Omega_i{(\tau)}$ and $(o_{i}-x_i)\Omega_i{(\tau)}=o_{i}\Omega_i{(\tau)}$. Together it holds that when $x_i= 0$ or $o_i= 1$ 
\vspace{-1ex}
\begin{equation}
\dot{x}_{i}\leq(o_{i}-x_i)\Omega_i{(\tau)}.
\end{equation}

Translating this to matrix form gives that $\dot{z}\leq Pz.$

\end{proof}

\begin{lemma}\label{lem:leqz}
The eigenvalues of $P$ all have non-positive real part. { If there $\exists i,~ \textrm{s.t.}~ \gamma_{i}<1, w_{i}^{x}>0$ then $\alpha(P^{\ast})<0$. Otherwise,} $\alpha(P)=0$. 
\end{lemma}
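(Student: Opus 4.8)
The plan is to read off the diagonal‐dominance structure of $P$ and then split into the boundary case ($\alpha(P)=0$) and the strict case ($\alpha(P)<0$), using the Gershgorin Disc Theorem together with the WCDD nonsingularity result (Lemma~\ref{lem:wcdd}), mirroring the proof of Lemma~\ref{lem:loc0}. First I would record the signs: the diagonal entries of $P$ are $-\Omega_i(\tau)<0$ on the first $N$ rows and $-(d_i^O+w_i^x)<0$ on the last $N$ rows. Comparing each diagonal magnitude with its off‑diagonal row sum, the adoption rows give $\Omega_i(\tau)=\Omega_i(\tau)$ (equality), while the $i$th opinion row gives $(d_i^O+w_i^x)-(d_i^O+w_i^x\gamma_i)=w_i^x(1-\gamma_i)\ge 0$. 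Hence $P$ is diagonally dominant with negative diagonal, so by the Gershgorin Disc Theorem every eigenvalue lies in the closed left half‑plane; moreover each Gershgorin disc is centered on the negative real axis with radius at most its center's magnitude, so it meets the imaginary axis only at the origin. Thus any eigenvalue with zero real part must equal $0$, which already establishes the first assertion that $\sigma(P)$ has non‑positive real part.

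For the ``otherwise'' branch I would compute $P1_{2N}$ directly: the adoption rows contribute $-\Omega_i(\tau)+\Omega_i(\tau)=0$ and the opinion rows contribute $w_i^x(\gamma_i-1)$. When no index satisfies $\gamma_i<1$ and $w_i^x>0$ simultaneously, every opinion row sum vanishes (either $\gamma_i=1$ or $w_i^x=0$), so $P1_{2N}=0_{2N}$ and $0\in\sigma(P)$, giving $\alpha(P)\ge 0$. Combined with the non‑positive bound above, this forces $\alpha(P)=0$.

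For the strict branch, suppose $\gamma_{i_0}<1$ and $w_{i_0}^x>0$ for some $i_0$. The computation above shows the opinion row $N+i_0$ is \emph{strictly} diagonally dominant, so the set $J$ of strictly dominant rows is nonempty. I would then verify the chaining condition: every adoption row $i$ carries the nonzero coupling entry $P_{i,N+i}=\Omega_i(\tau)>0$, giving a walk into the opinion block, and by strong connectivity of $\mathcal{G}_O$ (Assumption~\ref{ass:og}) there is a directed walk within the opinion block from any $N+j$ to $N+i_0\in J$. Hence every row reaches $J$, so $P$ is WCDD and nonsingular by Lemma~\ref{lem:wcdd}, whence $0\notin\sigma(P)$. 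Since the Gershgorin analysis confines any imaginary‑axis eigenvalue to the origin, excluding $0$ excludes the whole imaginary axis, and therefore $\alpha(P)<0$.

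The main obstacle is precisely this strict case: a crude maximal‑row‑sum estimate only yields $\alpha(P)\le 0$, because the adoption rows always have zero row sum, so strictness cannot come from row sums alone. The essential idea is to extract strict negativity from \emph{nonsingularity}, which is why the WCDD route is the natural tool here rather than a row‑sum or Perron–Frobenius estimate; the only real work is confirming the directed‑walk condition, where the coupling edges $\Omega_i(\tau)>0$ connect every adoption node into the opinion block and the strong connectivity of $\mathcal{G}_O$ chains every opinion node to the single strictly dominant row $N+i_0$.
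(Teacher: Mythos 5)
Your proof is correct and follows essentially the same route as the paper's: Gershgorin discs for the non-positive real parts, the vector $1_{2N}$ as a null eigenvector in the ``otherwise'' branch, and the WCDD nonsingularity argument (via the coupling entries $\Omega_i(\tau)>0$ and strong connectivity of $\mathcal{G}_O$) for the strict branch. You are in fact slightly more careful than the paper in two places --- covering the ``otherwise'' subcase where $\gamma_i<1$ but $w_i^x=0$, and noting that the discs meet the imaginary axis only at the origin so that nonsingularity alone suffices to conclude $\alpha(P)<0$ --- but these are refinements of the same argument, not a different approach.
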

\begin{proof}
$P$ is diagonally dominant and has negative diagonal entries. Therefore by 

{
the Gershgorin Disc Theorem
}
the real parts of all the eigenvalues are non-positive. { Suppose $\gamma_{k}<1, w_{k}^{x}>0$, then strong connectivity of the opinion graph shows there is a path to all $i\in\{N+1,\dots,2N\}\setminus \{N+k\}$ and the fact that $\beta_{ii}>0$ implies that there is a path to all $i\in\{1,\dots,N\}$. Then $P$ is non singular by Lemma \ref{lem:wcdd} and as the real parts of all the eigenvalues are non-positive, $\alpha(P)<0$. }
To see that $\alpha(P)=0$ { when $\gamma_{i}=1,~\forall i$} consider the vector $1_{2N}$. As $P1_{2N}=0_{2N}$, $1_{2N}$ is an eigenvector with a corresponding zero eigenvalue. As the real parts of the eigenvalues of $P$ are non-positive, 
$\alpha(P)$ must be zero.

\end{proof}
{ \noindent In the following proof we consider an upper bound of \eqref{eq:Pmat}: \begin{equation}P=\begin{bmatrix}
-\bar{B} & \bar{B} \\
W & -(\mathcal{L}_{o}+W)
\end{bmatrix},\end{equation} 
}
\begin{proof}[Proof of Theorem \ref{thm:0global}]
As the opinion graph is strongly connected {and $w_{i}>0,\forall i$ }then the matrix $P$ is irreducible and as such the eigenvalue $0$ is simple \cite{berman1994nonnegative}. By Lemma \ref{lem:posdiag}, there exists a positive diagonal matrix $Q$ such that $P^{T}Q+QP$ is negative semidefinite. Consider the Lyapunov function $V(z)=z^{T}Qz$. Then by Lemma \ref{lem:Pmat}
\vspace{-1ex}
\begin{align} \label{eq:upper}
\dot{V}(z)&=\dot{z}^{T}Qz+z^{T}Q\dot{z}\\
&\leq z^{T}(P^{T}Q+QP)z \leq 0.\\
\end{align}
\vspace{-5ex}

\noindent
In what follows we will argue that $\dot{V}(z)<0, ~\forall z\neq 0_{2N}$, allowing the use of $V(z)$ to show stability by Lyapunov's direct method. In the case that the upper bound $z^{T}(P^{T}Q+QP)z<0$, $\dot{V}(z)<0$ follows trivially. However when the upper bound $z^{T}(P^{T}Q+QP)z=0$, more analysis is required to show that $\dot{V}(z)<0$. Since $Q$ is a positive diagonal matrix, $z^{T}(P^{T}Q+QP)z=0$ is achieved when $z$ equals the right eigenvector of $P$ associated with zero, i.e. when $z\in\mathrm{span}(1_{2N})$.  

Consider the dynamics for $x_{i}$ when $x_{i}=o_{i}\neq 0$ and $x_{i}=o_{i}\neq 1$ then by the condition on the adoption parameters
		\begin{align}
		\dot{x}_i&=-\delta_i x_i (1-o_i) + (1-x_i)o_i\left( \sum_{\mcal{N}_i^A}\beta_{ij} x_j +\beta_{ii} \right) \\
		&=-\delta_i x_i(1-x_i) + (1-x_i)x_i\left( \sum_{\mcal{N}_i^A}\beta_{ij} x_j +\beta_{ii} \right) \\
		&=x_i (1-x_i)\left(-\delta_i + \sum_{\mcal{N}_i^A}\beta_{ij} x_j +\beta_{ii} \right) \\
		&\leq x_i (1-x_i)\left(-\delta_i + \Omega_i{(\tau)} \right) \\
		&<0.
		\end{align}
        
Considering the opinion dynamic, with the condition that $o_{i}=o_{j} \ \forall i,j\in \{1,\dots,N\}$ because $z\in\mathrm{span}(1_{2N})$ then  
        \begin{align}
        \dot{o}_{i}&=\sum_{j\in\mcal{N}_i^O} (o_j-o_i)+w_{i}^{x}\left({\gamma_{i}x_{i}}-o_{i}\right)\\
        &{\leq}        ~w_{i}^{x}\left(x_{i}-x_{i}\right)=0.\\
        \end{align}
        \vspace{-5ex}
        
\noindent Then separating $Q$ into  $Q=\begin{bmatrix} Q_{1}&0\\0&Q_{2}\end{bmatrix}$
 we have that
\begin{align}
        \dot{V}&=x^{T}Q_{1}\dot{x}+o^{T}Q_{2}\dot{o}\\
        &{\leq}~x^{T}Q_{1}\dot{x} <0
\end{align}
\vspace{-4ex}

\noindent
where negativity holds as $Q_{1}$ is a positive diagonal matrix.

Having shown when $z^{T}(P^{T}Q+QP)z=0$ that $\dot{V}<0$, we now consider the two remaining cases, that $z\notin\mathrm{span}(1_{2N})$ and $z=0_{2N}$. In the considered domain, $z^{T}(P^{T}Q+QP)z<0$ when $z\notin\mathrm{span}(1_{2N})$, which implies that $\dot{V}<0$. When $z=0_{2N}$, the Lyapunov function satisfies $V(0_{2N})=0$ and $\dot{V}(0_{2N})=0$. 
 Then the form of the Lyapunov function shows that $V(z)>0, \ z\neq0_{2N}$ on the considered domain. The above argument shows that $\dot{V}(z)<0, \ z\neq0_{2N}$. These together show the stability of $0_{2N}$ via Lyapunov's direct method.
\end{proof}

Next we show the stability results of $z^\ast=1_{2N}$.

 \begin{lemma}
    The equilibrium point $z^\ast=1_{2N}$ is locally stable if $\forall i, \ \ \Omega_i{(1)} > \delta_{i},~ { \gamma_{i}=1}$.
    \end{lemma}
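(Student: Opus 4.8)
The plan is to adapt the Jacobian argument from the proof of Lemma~\ref{lem:loc0}, now evaluated at the hit equilibrium $z^\ast = 1_{2N}$. First I would assemble the Jacobian at $1_{2N}$ from the partial derivatives \eqref{eq:dfdxi}--\eqref{eq:dg'doj}. Setting $x_i = o_i = 1$ and $\gamma_i = 1$ collapses the off-diagonal adoption couplings: \eqref{eq:dfdxj} vanishes since $(1-x_i)=0$, \eqref{eq:dfdxi} reduces to $-\Omega_i(1)$, and \eqref{eq:dfdoi} reduces to $\delta_i$. This produces the block form
$$J(1_{2N}) = \begin{bmatrix} \mathrm{diag}(-\Omega_i(1)) & \mathrm{diag}(\delta_i) \\ W & -(\mathcal{L}_O + W) \end{bmatrix},$$
the analogue of \eqref{eq:sjab} with $\Gamma = I$; the objective is to show this matrix is Hurwitz.

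Next I would check the diagonal-dominance structure. All diagonal entries are negative: the first $N$ equal $-\Omega_i(1)<0$, and the second $N$ equal $-(d_i^O + w_i^x)<0$, where strong connectivity (Assumption~\ref{ass:og}) forces $d_i^O>0$. The hypothesis $\Omega_i(1) > \delta_i$ makes each of the first $N$ rows strictly diagonally dominant, since the only off-diagonal entry of row $i$ is the $\delta_i$ located in column $N+i$. The opinion rows, however, are only weakly diagonally dominant: in row $N+i$ the off-diagonal magnitudes are $w_i^x$ from the lower-left block together with the Laplacian weights summing to $d_i^O$, totalling exactly the diagonal magnitude $d_i^O + w_i^x$. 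Because $\gamma_i = 1$, the lower-left coupling matches the diagonal precisely, so these rows cannot be made strictly dominant on their own.

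The key step, and the main obstacle, is therefore to establish the weak-chaining condition for the weakly dominant opinion rows. For each opinion row $N+i$ I would exhibit a directed walk in the graph of $J(1_{2N})$ to a strictly dominant row, i.e. to an index in $\{1,\dots,N\}$. When $w_i^x > 0$, the nonzero entry $(N+i,i)$ supplies a direct edge to the strictly dominant adoption row $i$. When $w_i^x = 0$, I would route within the opinion block: Assumption~\ref{ass:og} gives a walk from $N+i$ to some $N+j$ with $w_j^x > 0$ (which exists by Assumption~\ref{ass:w}), and then the edge $(N+j,j)$ reaches the adoption row $j$. Hence $J(1_{2N})$ is WCDD and, by Lemma~\ref{lem:wcdd}, nonsingular.

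Finally I would promote nonsingularity to the Hurwitz property exactly as in the $0_{2N}$ case: for any $\lambda$ with $\mathrm{Re}(\lambda)\geq 0$, the shifted matrix $J(1_{2N}) - \lambda I$ has diagonal magnitudes at least $|J_{ii}|$ while its off-diagonal pattern is unchanged, so it remains WCDD and therefore nonsingular, ruling out any eigenvalue with nonnegative real part. Combined with the negativity of every diagonal entry, the Gershgorin Disc Theorem then confines the spectrum to the open left half-plane, which establishes local stability of $1_{2N}$. The only delicate point is the chaining argument above, where Assumptions~\ref{ass:w} and \ref{ass:og} are precisely what propagate the strict dominance of the adoption rows into the opinion block.
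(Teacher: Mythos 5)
Your proposal is correct and follows essentially the same route as the paper: form the Jacobian at $1_{2N}$, use $\Omega_i(1)>\delta_i$ for strict dominance of the adoption rows, chain the weakly dominant opinion rows to them via Assumptions \ref{ass:w} and \ref{ass:og} to get the WCDD property, and conclude Hurwitz from Lemma \ref{lem:wcdd} together with the Gershgorin Disc Theorem. Your explicit shifted-matrix step promoting nonsingularity to the Hurwitz property is a detail the paper leaves implicit, but it is the same argument.
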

    \begin{proof}
{ As $\gamma_{i}=1,~\forall i$, $z^\ast=1_{2N}$ is an equilbrium point.} ~Consider the Jacobian matrix at the equilibrium point $z^\ast=1_{2N}$ 
\begin{equation}\small
J(z^*) = \left[
\begin{array}{c|c}
\displaystyle \text{diag}\left(-\Omega_i{(1)}  \right) & \displaystyle \text{diag}\left(  \delta_{i} \right)  \\
\hline 
W & -(\mathcal{L}_{O}+W) 
\end{array}
\right].\label{eq:sjab2}\normalsize
\end{equation}

{ Similarly to the case of the equilibrium at $0_{2N}$, the condition $\Omega_i{(1)} > \delta_{i}, \; \forall i$ and Assumptions \ref{ass:w} and \ref{ass:og} show that $J(z^*)$ is WCDD. Since the diagonal elements are negative, the Jacobian is Hurwitz, by 
the Gershgorin Disc Theorem, which implies }
 $z=1_{2N}$ is a locally stable equilibrium. 
    \end{proof}
 \begin{proof}[Proof of Theorem \ref{th:as1}]
    To show asymptotic stability of $1_{2N}$, consider the change of variables $\hat{x}_{i}=1-x_{i}$ and $\hat{o}_{i}=1-o_{i}$. Then $\dot{\hat{x}}_{i}=-\dot{x}_{i}$ and $\dot{\hat{o}}_{i}=-\dot{o}_{i}$. It follows that:
    \vspace{-1ex}
\begin{align}
\dot{\hat{x}}_{i}&=\delta_i x_i (1-o_i) - (1-x_i)o_i\left( \sum_{\mcal{N}_i^A}\beta_{ij} x_j +\beta_{ii} \right) \\
&=\delta_i (1-\hat{x}_i)\hat{o}_i - \hat{x}_i(1-\hat{o}_i)\left( \sum_{\mcal{N}_i^A}\beta_{ij} (1-\hat{x}_j) +\beta_{ii} \right)
\end{align}
\vspace{-5ex}

\noindent
and
\begin{align}
\dot{\hat{o}}_{i}&=\sum_{\mcal{N}_i^O} (o_i-o_j)+w_{i}^{x}\left(o_{i}-{\gamma_{i}}x_{i}\right)\\ 
&=\sum_{\mcal{N}_i^O} (\hat{o}_j-\hat{o}_i)+w_{i}^{x}\left(\hat{x}_{i}-\hat{o}_{i}\right)\\ 
\end{align}
\vspace{-6ex}

\noindent
{as $\gamma_{i}=1,\forall i$}. 
~Consider the $\hat{x}_{i}$ dynamic: 
\vspace{-2ex}
\begin{align}
\dot{\hat{x}}_{i}&=\delta_i (1-\hat{x}_i)\hat{o}_i - \hat{x}_i(1-\hat{o}_i)\left( \sum_{\mcal{N}_i^A}\beta_{ij} (1-\hat{x}_j) +\beta_{ii} \right)\\
&\leq\delta_i (1-\hat{x}_i)\hat{o}_i - \hat{x}_i(1-\hat{o}_i)\left({\Omega_{i}(\tau)} \right)\\
&< \delta_i (1-\hat{x}_i)\hat{o}_i - \hat{x}_i(1-\hat{o}_i)\left(\delta_{i} \right)\\
&=\delta_{i}(\hat{o}_{i}-\hat{x}_{i}).
\end{align}
Then the matrix 
\vspace{-1ex}
\begin{equation}\hat{P}=\begin{bmatrix}
-D & D \\
W & -(\mathcal{L}_{o}+W)
\end{bmatrix}, \label{eq:Phatmat}\end{equation}
where $D=\mathrm{diag}\left(\delta_{i}\right)$, satisfies 

\noindent
Similar to the proof of Theorem \ref{thm:0global}, if the opinion graph is strongly connected {and $w_{i}^{x}>0, \forall i$} then $\hat{P}$ is irreducible and by Lemma \ref{lem:posdiag}, there exists a $\hat{Q}$ that renders $\hat{P}'\hat{Q}+\hat{Q}\hat{P}$ negative semi-definite. Then as in Theorem \ref{thm:0global}, one can use $\hat{z}^{T}\hat{Q}\hat{z}$ as a Lyapunov function to show stability of $1_{2N}$.
\end{proof}
The characterization of the unstable equilibrium follows.
    \begin{proof}[Proof of Lemma \ref{lem:eqast}]
Consider the dynamic in $x_{i}$ at the point $z^{\ast}$ under the assumption that $\delta_{i}=\sum_{\mcal{N}_i^A} \beta_{ij} x^{\ast}_{j} +\beta_{ii}$:
\vspace{-2ex}
\begin{align}
\dot{x}_{i}&=-\delta_{i}(1-o^{\ast}_{i})x^{\ast}_{i}+(1-x^{\ast}_{i})o^{\ast}_{i}\left(\sum_{\mcal{N}_i^A} \beta_{ij} x^{\ast}_{j} +\beta_{ii}\right)\\
&=-\delta_{i}(1-o^{\ast}_{i})x^{\ast}_{i}+(1-x^{\ast}_{i})o^{\ast}_{i}\delta_{i}\\
&=\delta_{i}(o^{\ast}_{i}-x^{\ast}_{i})
\end{align}
then $\dot{x}_{i}=0$ if $o^{\ast}_{i}=x_{i}^{\ast}$. Substituting in the other conditions: 
\begin{align}
\dot{o}_{i}&=\sum_{j\in\mathcal{N}_{o}^{i}} (o^{\ast}_{j}-o^{\ast}_{i})+({\gamma_{i}}x^{\ast}_{i}-o^{\ast}_{i})\\
&=\sum_{j\in\mathcal{N}_{o}^{i}} (x^{\ast}_{j}-x^{\ast}_{i})+(x^{\ast}_{i}-x^{\ast}_{i})\\
&=\sum_{j\in\mathcal{N}_{o}^{i}} (x^{\ast}_{j}-x^{\ast}_{i})\\
\end{align}
\vspace{-5ex}

\noindent
Then if $\sum_{j\in\mathcal{N}_{o}^{i}} (x^{\ast}_{j}-x^{\ast}_{i})=0$, $\dot{o}_{i}=0.$ If these conditions hold for all $i$ then $z^{\ast}$ is an equilibrium point.
\end{proof}
\begin{proof}[Proof of Theorem \ref{th:unstab}]
Consider the Jacobian at the equilibrium point $z^{\ast}$, the properties of which are described in Lemma \ref{lem:eqast}. The derivatives of the adoption dynamic at $z^{\ast}$ follow
\begin{align}
	\frac{\partial{f_i}}{\partial x_i} &= -\delta_i \\
	\frac{\partial{f_i}}{\partial x_j} &= \begin{cases} (1-x^{\ast}_i)x^{\ast}_i\beta_{ij} \ &\text{if} \ j\in\mcal{N}_i^A, j\neq i \\ 0 &\text{if} \ j\notin \mcal{N}_i^A \cup \{i\} \end{cases} \\
	\frac{\partial{f_i}}{\partial o_i} &= \delta_{i} \\
	\frac{\partial{f_i}}{\partial o_j} &=  0 \ \forall j \neq i.
\end{align}
The Jacobian can be written as 
\begin{equation}
J(z^{\ast}) = \left[
\begin{array}{c|c}
\displaystyle\frac{\partial f}{\partial x} & D \\
\hline
W & -(\mathcal{L}_{o}+W)
\end{array}
\right]
\end{equation}
where $D=\mathrm{diag}(\delta_{i})$. If the opinion graph $\mathcal{G}_{O}$ is strongly connected, the Jacobian is irreducible, which allows the application of Lemma \ref{lem:metzeig}. Consider a vector $y$ of the form $y=\begin{bmatrix} \alpha_{1} &
\alpha_{2} &
\dots &
\alpha_{N} &
{1}_{N}^{T}
\end{bmatrix}^{T}$
where $\alpha_{i}=1+\epsilon_{i}$ and $0<\epsilon_{i}<\frac{\sum_{\mcal{N}_{i}^{A}}\beta_{ij}(1-x_{i}^{\ast})x_{i}^{\ast}}{\delta_{i}}.$
Then consider the matrix product $J(z^{\ast})y$. The first $N$ entries of $J(z^{\ast})y$ follow:
\begin{align}
&-\delta_{i}\alpha_{i}+\alpha_{i}\left(\sum_{\mcal{N}_{i}^{A}}\beta_{ij}(1-x_{i}^{\ast})x_{i}^{\ast}\right)+\delta_{i}\\
&>-\delta_{i}\alpha_{i}+\left(\sum_{\mcal{N}_{i}^{A}}\beta_{ij}(1-x_{i}^{\ast})x_{i}^{\ast}\right)+\delta_{i}\\
&>-\delta_{i}\epsilon_{i}+\left(\sum_{\mcal{N}_{i}^{A}}\beta_{ij}(1-x_{i}^{\ast})x_{i}^{\ast}\right) \\
&>-\left(\sum_{\mcal{N}_{i}^{A}}\beta_{ij}(1-x_{i}^{\ast})x_{i}^{\ast}\right) +\left(\sum_{\mcal{N}_{i}^{A}}\beta_{ij}(1-x_{i}^{\ast})x_{i}^{\ast}\right)\\
&=0.
\end{align}
The last $N$ entries of $J(z^{\ast})y$ follow
\begin{align}
\alpha_{i}w_{i}^{x}-d_{i}^{O}-w_{i}^{x} + \sum_{\mcal{N}_i^O} 1 
&=(\alpha_{i}-1)w_{i}^{x}\\
&=\epsilon_{i}w_{i}^{x}\\
&>0.
\end{align}
As $y$ is elements-wise positive and the resulting vector $J(z^{\ast})y$ is element-wise positive, by Lemma \ref{lem:metzeig}, $\alpha(J(z^{\ast}))>0$ and the equilibrium point is unstable. 
\end{proof}

{ Finally we show stability of the adoption dynamic coupled with the bounded confidence opinion dynamic model shown in \eqref{eq:ob}. The non-positivity of the eigenvalues of the matrix $P$ from \eqref{eq:Pmat} and $\hat{P}$ from \eqref{eq:Phatmat}, which are used to characterize the stability of the equilibria of the coupled adoption behavior, does not depend on the structure of the opinion graph. As such the results for the stability of the equilibria of the coupled adoption model can be extended to the bounded confidence opinion dynamic model if the matrix associated with the upper bound is shown to be negative semidefinite.} Proving asymptotic stability requires the following definitions from the study of switched systems, which follow~\cite{liberzon2012switching}. Consider a family of systems, with some index set $\mathcal{P}$,
\begin{equation}\label{eq:fam}
\dot{x}=f_{p}(x) \ \ p\in\mathcal{P}
\end{equation}
which has a switching signal $\sigma(t):[0,\infty)\rightarrow\mathcal{P}$ which determines the switches between systems. This gives rise to a switched system,  
\vspace{-1.5ex}
\begin{equation}\label{eq:switch}
\dot{x}=f_{\sigma}(x).
\end{equation}
\begin{definition}
A switched system is uniformly asymptotically stable if it is asymptotically stable for all switching signals.  
\end{definition}
\begin{definition}\label{def:clf}
A positive definite $\mathcal{C}^{1}$ function $V$ is a common Lyapunov function for the family of systems in \eqref{eq:fam} if there is a positive definite continuous function $W$ such that
\begin{equation}
\frac{\partial V}{\partial t} f_{p}(x)\leq -W(x) \ \ \forall x\neq0, \ \ \forall p\in\mathcal{P}
\end{equation}
or equivalently if $\mathcal{P}$ is compact and 
\begin{equation}
\frac{\partial V}{\partial t} f_{p}(x) <0 \ \ \forall x\neq0, \ \ \forall p\in\mathcal{P}.
\end{equation}
\end{definition}
\begin{lemma}[Theorem 2.1 from \cite{liberzon2012switching}]\label{th:unistab}
If all systems in the family in \eqref{eq:fam} share a common Lyapunov function, then the switched system in \eqref{eq:switch} is uniformly asymptotically stable.  
\end{lemma}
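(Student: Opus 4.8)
The plan is to reproduce the standard common-Lyapunov-function argument, since the statement is classical (Theorem 2.1 of \cite{liberzon2012switching}) and its proof uses no structure particular to the adoption--opinion model. The one idea that does the work is that the decrease estimate in Definition \ref{def:clf} is imposed \emph{uniformly} over the index set $\mathcal{P}$, so the single function $V$ certifies decay of every trajectory regardless of which subsystem is active.

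First I would fix an arbitrary switching signal $\sigma$ and an arbitrary solution $x(\cdot)$ of the switched system \eqref{eq:switch}. On each open interval between consecutive switching instants $\sigma$ is constant, say $\sigma(t)\equiv p\in\mathcal{P}$, and there $x(\cdot)$ solves the corresponding member $\dot{x}=f_{p}(x)$ of the family \eqref{eq:fam}; differentiating $V$ along this flow and using Definition \ref{def:clf} gives
\begin{equation}
\frac{d}{dt}V(x(t))=\frac{\partial V}{\partial t}f_{p}(x(t))\leq -W(x(t)).
\end{equation}
The next step is to observe that $V$ is one fixed $\mathcal{C}^{1}$ function shared by all subsystems, so $t\mapsto V(x(t))$ is continuous across switches; combined with the piecewise estimate, it is continuous and nonincreasing on $[0,\infty)$, strictly decreasing whenever $x(t)\neq 0$, with a bound $-W(x(t))$ depending on neither $p$ nor $\sigma$.

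With this uniform decrease in hand I would invoke the Lyapunov direct method: positive definiteness of $V$ (and radial unboundedness, or compactness of the working domain) together with $\dot{V}\leq -W(x)<0$ for $x\neq 0$ yields stability and attractivity of the origin. Because the comparison functions produced by this argument are built from $V$ and $W$ alone, the resulting estimates are identical for every switching signal, which is exactly uniform asymptotic stability. When $\mathcal{P}$ is compact, the strict-inequality form in Definition \ref{def:clf} furnishes such a $W$ through the extreme-value theorem, so the two formulations agree.

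The step needing the most care is the behavior at switching instants, where $x(\cdot)$ is only piecewise $\mathcal{C}^{1}$ and $\frac{d}{dt}V(x(t))$ may fail to exist. The obstacle dissolves precisely because $V$ is common to the whole family: the value $V(x(t))$ does not jump, and the one-sided derivatives on either side of a switch are both at most $-W(x(t))$, so no switching signal can interrupt the monotone descent. This is all the paper needs, since Lemma \ref{th:unistab} is applied only through the common quadratic Lyapunov functions already constructed for the bounded-confidence dynamics in the spirit of Theorems \ref{thm:0global} and \ref{th:as1}.
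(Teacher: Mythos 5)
The paper offers no proof of this lemma at all: it is imported verbatim as Theorem 2.1 of \cite{liberzon2012switching} and used as a black box, so there is nothing internal to compare your argument against. Your reconstruction is the standard textbook argument and is essentially correct: constancy of $\sigma$ between switches gives the decrease estimate $\frac{d}{dt}V(x(t))\leq -W(x(t))$ with a bound independent of $p$, continuity of the single function $V$ rules out jumps at switching instants, and the comparison functions extracted from $V$ and $W$ are the same for every $\sigma$, which is what ``uniform'' means here. Two small points if you wanted to make this fully rigorous in the style of Liberzon's own proof: first, one normally assumes the switching signal has finitely many switches on every bounded interval (no Zeno accumulation), since otherwise ``piecewise'' monotonicity of $t\mapsto V(x(t))$ needs a separate argument; second, the passage from ``$\dot V\leq -W(x)<0$ for $x\neq 0$'' to attractivity with signal-independent rates is usually packaged by sandwiching $V$ between class-$\mathcal{K}$ functions of $\|x\|$ and integrating the resulting scalar differential inequality to obtain a class-$\mathcal{KL}$ bound; you gesture at this but do not carry it out. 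Neither omission is a gap in the sense of a wrong step, and for the way the paper uses the lemma (quadratic $V$ on a compact domain with finitely many graph topologies) your sketch suffices.
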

\begin{proof}[Proof of \ref{th:as0bc}]
Consider the finite collection of opinion graph topologies $\hat{\mathcal{G}}_{o}=\{\mathcal{G}_{o}^{1},\mathcal{G}_{o}^{2},\dots,\mathcal{G}_{o}^{s}\}$ which the bounded confidence model can switch between. The original opinion graph $\mathcal{G}_{o}\in\hat{\mathcal{G}}_{o}$, and also the empty opinion graph $\mathcal{G}_{o}^{\emptyset}\in\hat{\mathcal{G}}_{o}$. Consider the graph $\mathcal{G}_{o}^{i}$ which consists of $k$ connected subgraphs. Then under the opinion dynamic on $\mathcal{G}_{o}^{i}$ and if $\delta_{i}>\Omega_i{(1)} \ \forall i$ the dynamics follow 
\vspace{-1.5ex}
\begin{equation}
\dot{z}\leq P_{i}z
\end{equation}
\vspace{-5ex}

\noindent
where 
\vspace{-1.5ex}
\begin{equation}
P_{i}=\begin{bmatrix}
-\bar{B} & \bar{B} \\
W & -\left(\begin{bmatrix}
\mathcal{L}_{o}^{1} & 0 & \dots & 0 \\
0&\mathcal{L}_{o}^{2} & \dots & 0 \\
0&0&\ddots&0 \\
0&0&\dots&\mathcal{L}_{o}^{k}
\end{bmatrix}+W\right)
\end{bmatrix}, 
\end{equation}
though a permutation may be required to put the opinion dynamics into this form. 
The matrix $P_{i}$ is negative semidefinite as it is diagonally dominant with negative diagonal elements and symmetric by 
assumption. 
Under $\mathcal{G}_{o}^{\emptyset}$ and if $\delta_{i}>\Omega_i{(1)} \ \forall i$ the dynamics follow 
\vspace{-1.5ex}
\begin{equation}
\dot{z}\leq \begin{bmatrix}
-\bar{B} & \bar{B} \\
W & -W
\end{bmatrix}z=P_{\emptyset}z.
\end{equation}

The matrix $P_{\emptyset}$ is also negative semidefinite as it is diagonally dominant with negative diagonal elements and symmetric by 
assumption.
Then under any opinion graph $\mathcal{G}^{j}\in\hat{\mathcal{G}}_{o}$ 
the function $V(z)=\frac{1}{2}z^{T}z$ satisfies 
\vspace{-1ex}
\begin{align}
\dot{V}&\leq z^{T}P_{j}z \\&\leq 0.
\end{align}
Unlike the case of Theorem \ref{thm:0global}, there are now multiple eigenvectors associated with the zero eigenvalue, specifically one for each connected component of the subgraph. In the extreme case of the empty opinion graph $\mathcal{G}_{o}^{\emptyset}$, if node $i$ satisfies $x_{i}=o_{i}=1$ then it will stay there indefinitely independent of the behavior of the other nodes in the system as this is an equilibrium point for $x_{i}$ and $o_{i}$. Now if $x_{i}=o_{i}=c$ for $0<c<1$ one can use a similar argument to Theorem \ref{thm:0global} to show that the chosen Lyapunov function has $\dot{V}<0$.

Therefore, the domain over which stability is considered has been modified to $[0,1)^{2N}$ to prevent such occurrences. This domain excludes the eigenvectors of $P_{j}$ associated with the zero eigenvalue that are also equilibrium points of the system. Therefore on the domain $[0,1)^{2N}$, $\dot{V}(z)<0 \ \forall z\neq 0_{2N}.$  As the set of possible graph topologies is finite, $V(x)=\frac{1}{2}z^{T}z$ serves as a common Lyapunov function by Definition \ref{def:clf} and can be used to show that the system is uniformly asymptotically stable by Lemma \ref{th:unistab}. 
\end{proof}
The theorem for the stability of the hit equilibrium point $z^{\ast}=1_{2N}$ is presented without proof as the proof follows similarly to the proof of Theorem \ref{th:as0bc}.
 \bibliographystyle{unsrt}
\bibliography{bib}

\end{document}